\newcommand{\hastypei}[3]{#1 \vdash_{i} #2 : #3}
\newcommand{\hastypec}[4]{#1;#2 \vdash_{c} #3 : #4}
\newcommand{\hastypel}[5]{#1;#2;#3 \vdash_{l} #4 : #5}
\newcommand{\hastypelt}[6]{#1;#2;#3 \vdash_{l} #4 :_{#5} #6}
\newcommand{\dom}[1]{\sym{dom}(#1)}
\newcommand{\attime}[2]{#1 \downarrow^{#2}}
\newcommand{\wfcxti}[1]{\vdash_i #1}
\newcommand{\wfcxtc}[2]{#1 \vdash_c #2}
\newcommand{\wfcxtl}[2]{#1 \vdash_l #2}
\newcommand{\Time}{\sym{Time}}
\newcommand{\Id}{\sym{Id}}
\newcommand{\CUnit}{1}
\newcommand{\LUnit}{\sym{I}}
\newcommand{\Event}{\Diamond}
\newcommand{\G}[1]{\sym{G}\;#1}
\newcommand{\F}[1]{\sym{F}\;#1}
\newcommand{\Widget}{\sym{Widget}}
\newcommand{\Str}[1]{\sym{Str}\;#1}
\newcommand{\lunit}{\langle \rangle}
\newcommand{\cunit}{\star}
\newcommand{\letterm}[3]{\sym{let}\;#1 = #2\;\sym{in}\;#3}
\newcommand{\letunit}[2]{\letterm{\lunit}{#1}{#2}}
\newcommand{\letunitt}[3]{\letterm{\lunit \at #1}{#2}{#3}}
\newcommand{\letpair}[4]{\letterm{(#1,#2)}{#3}{#4}}
\newcommand{\letpairt}[5]{\letterm{(#1,#2) \at #3}{#4}{#5}}
\newcommand{\pair}[2]{\langle #1,#2 \rangle}
\newcommand{\event}{\sym{evt}}
\newcommand{\letevent}[3]{\letterm {\event\;#1}{#2}{#3}}
\newcommand{\at}{\mathop{@}}
\newcommand{\letat}[4]{\letterm{#1 \at #2}{#3}{#4}}
\newcommand{\runG}{\sym{runG}}
\newcommand{\letF}[3]{\letterm{\F{#1}}{#2}{#3}}
\newcommand{\exterm}[2]{\{#1,#2\}}
\newcommand{\letexists}[4]{\letterm{\sym{unpack}\;\exterm{#1}{#2}}{#3}{#4}}
\newcommand{\selectterm}[6]{\sym{select}\;(#1\;\sym{as}\;#2 \mapsto #3 \mid #4\;\sym{as}\;#5 \mapsto #6)}
\newcommand{\sub}[3]{\{#1/#2\}#3}
\newcommand{\Cmd}{\sym{Cmd}}
\newcommand{\comp}{\bowtie}
\newcommand{\Set}{\sym{Set}}
\newcommand{\rcat}{\mathcal{R}}
\newcommand{\relcat}{\sym{Rel}_\rcat}
\newcommand{\Later}{\rhd}
\newcommand{\Prev}{\lhd}
\newcommand{\set}[1]{\left \{#1 \right \}}
\newcommand{\setcom}[2]{\set{#1 \mid #2}}
\newcommand{\nats}{\mathbb{N}}
\newcommand{\sym}[1]{\ensuremath{\mathsf{#1}}}
\newcommand{\langname}{\lambda_{\Widget}}
\begin{document}
\title{Adjoint Reactive GUI Programming}
%
%

 \author{Christian Uldal Graulund\inst{1}\orcidID{0000-0003-3297-9471} \and
   Dmitrij Szamozvancev\inst{2} \and
   Neel Krishnaswami\inst{2}\orcidID{0000-0003-2838-5865}
}
\authorrunning{Graulund et al.}
%
 \institute{IT University of Copenhagen, 2300 Copenhagen, DK \email{cgra@itu.dk} \and
   University of Cambridge, Cambridge CB3 0FD, UK \email{nk480@cl.cam.ac.uk,ds709@cl.cam.ac.uk}}

\maketitle              
\begin{abstract}
  Most interaction with a computer is done via a graphical user
  interface.  Traditionally, these are implemented in an imperative
  fashion using shared mutable state and callbacks. This is efficient,
  but is also difficult to reason about and error prone. Functional
  Reactive Programming (FRP) provides an elegant alternative which
  allows GUIs to be designed in a declarative fashion. However, most
  FRP languages are synchronous and continually check for new
  data. This means that an FRP-style GUI will ``wake up'' on each
  program cycle. This is problematic for applications like text
  editors and browsers, where often nothing happens for extended
  periods of time, and we want the implementation to sleep until new
  data arrives. In this paper, we present an \emph{asynchronous} FRP
  language for designing GUIs called $\langname$. Our language
  provides a novel semantics for widgets, the building block of GUIs,
  which offers both a natural Curry--Howard logical interpretation and
  an efficient implementation strategy.

\keywords{Linear Types  \and Reactive Programming \and Asynchronous
  Programming \and Graphical User Interfaces}
\end{abstract}
\section{Introduction}
\label{sec:intro}
Many programs, like compilers, can be thought of as functions -- they
take a single input (a source file) and then produce an output (such
as a type error message). Other programs, like embedded controllers,
video games, and integrated development environments (IDEs), engage in
a dialogue with their environment: they receive an input, produce an
output, and then wait for a new input that depends on the prior input,
and produce a new output which is in turn potentially based on the
whole history of prior inputs.

Our usual techniques for programming interactive applications are
often very confusing, because the different parts of the program are
not written to interact via structured control flow (i.e., by passing
and return values from functions, or iterating over data in
loops). Instead, they communicate indirectly, by registering
state-manipulating callbacks with one another, which are then
implicitly invoked by an event loop. This makes program reasoning very
challenging, since each of these features -- aliased mutable state,
higher-order functions, and concurrency -- represents a serious
obstacle on its own, and interactive programs rely upon their
\emph{combination}.

This challenge has led to a great deal of work on better abstractions
for programming reactive systems. Two of the main lines of work on
this problem are \emph{synchronous dataflow} and \emph{functional
  reactive programming}. The synchronous dataflow languages, like
Esterel~\cite{esterel}, Lustre~\cite{lustre}, and Lucid
Synchrone~\cite{synchrone}, feature a programming model inspired by
Kahn networks. Programs are networks of stream-processing nodes which
communicate with each other, each node consuming and producing a fixed
number of primitive values at each clock tick. The first-order nature
of these languages makes them strongly analysable, which lets them
offer powerful guarantees on space and time usage. This means
they see substantial use in embedded and safety-critical contexts.

Functional reactive programming, introduced by Elliott and
Hudak~\cite{FRAN}, also uses time-indexed values, dubbed signals,
rather than mutable state as its basic primitive. However, FRP differs
from synchronous dataflow by sacrificing static analysability in
favour of a much richer programming model. Signals are true
first-class values, and can be used freely, including in higher-order
functions and signal-valued signals. This permits writing programs
with a dynamically-varying dataflow network, which simplifies writing
programs (such as GUIs) in which the available signals can change as
the program executes. Over the past decade, a long line of work has
refined FRP via the Curry--Howard
correspondence~\cite{krishnaswami2011ultrametric,Jeltsch2012,jeffrey2012,jeltsch2013temporal,krishnaswami13frp,cave14fair,bahr2019simply}. This
approach views functional reactive programs as the programming
counterpart for proofs of formulas in linear temporal
logic~\cite{LTL}, and has enabled the design of calculi which can rule
out spacetime leaks~\cite{krishnaswami13frp} or can enforce temporal
safety and liveness properties~\cite{cave14fair}.

However, both synchronous dataflow and FRP (in both original and modal
flavours) have a \emph{synchronous} (or ``pull'') model of time --
time passes in ticks, and the program wakes up on every tick to do a
little bit more computation. This is suitable for applications in
which something new happens at every time step (e.g., video games),
but many GUI programs like text editors and spreadsheets spend most of
their time doing nothing. That is, even at each event, most of the
program will continue doing nothing, and we only want to wake up a
component when an event directly relevant to it occurs. This is
important both from a performance point of view, as well as for saving
energy (and extending battery life). Because of this need, most GUI
programs continue to be written in the traditional
callbacks-on-mutable-state style.

In this paper, we give a reactive programming language whose type
system both has a very straightforward logical reading, and which can
give natural types to stateful widgets and the event-based programming
model they encourage. We also derive a denotational semantics of the
language, by first working out a semantics of widgets in terms of the
operations that can be performed upon them and the behaviour they
should exhibit. Then, we find the categorical setting in which the
widget semantics should live, and by studying the structure this setup
has, we are able to interpret all of the other types of the
programming language.

\subsection{Contributions}

The contributions of this paper are as follows:
\begin{itemize}
\item We give a descriptive semantics for widgets in GUI programming,
  and show that this semantics correctly models a variety of expected
  behaviours. For example, our semantics shows that a widget which is
  periodically re-set to the colour red is different from a widget
  that was only persistently set to the colour red at the first
  timestep. Our semantic model can show that as long as neither one is
  updated, they look the same, but that they differ if they are ever
  set to blue -- the first will return to red at reset time, and the
  second will remain blue.

\item From this semantics, we find a categorical model within which
  the widget semantics naturally fits. This model is a Kripke--Joyal
  presheaf semantics, which is morally a ``proof-relevant'' Kripke
  model of temporal logic.

\item We give a concrete calculus for event-based reactive
  programming, which can be implemented in terms of the standard
  primitives for modern GUI programming, scene graphs (or DOM) which
  are updated via callbacks invoked upon events. We then show that our
  model can soundly interpret the types of our calculus in an entirely
  standard way, showing that the types of our reactive programming
  language can be interpreted as time-varying sets.

\item Furthermore, this calculus has an entirely standard logical
  reading in terms of the Curry--Howard correspondence. It is a
  ``linear temporal linear logic'', with the linear part of the
  language corresponding to the
  Benton--Wadler~\cite{Benton1996LinearLM} LNL calculus for linear
  logic, and the temporal part of the language corresponding to S4.3
  linear temporal logic.  We also give a proof term for the S$_t$4.3
  axiom enforcing the linearity of time, and show that it corresponds
  to the \texttt{select} primitive of concurrent programming.
\end{itemize}


\section{The Language}
\label{sec:language}
In this section we give an informal presentation of $\langname$
through the API of the $\sym{Widget}$ type. This API mirrors how one
would work with a GUI at the browser level. An important feature of a
well-designed GUI is that it should not do anything when not in
use. In particular, it should not check for new inputs in each program
cycle (\emph{pull}-based reactive programming), but rather sleep until
new data arrives (\emph{push}-based reactive programming). Many FRP
languages are \emph{synchronous} languages and have some internal
notion of a timestep. These languages are mostly pull-based, whereas
more traditional imperative reactive languages are push-based. The
former have clear semantics and are easy to reason about, the latter
have efficient implementations. In $\langname$ we would like to
combine these aspects and get a language that is easy to reason about
with an efficient implementation.

In general, we think of a widget as a \emph{state through time}, i.e.,
at each timestep, the widget is in some state which is presented to
the user. The widget is modified by \emph{commands}, which can update
the state. To program with widgets, the programmer applies commands at
various times.

The proper type system for a language of widgets should thus be a
system with both state and time. If we consider what a \emph{logic}
for widgets should be, there are two obvious choices. A logic for
state is linear logic~\cite{girad1987lin}, and a logic for time is
linear temporal logic~\cite{LTL}. The combination of these
two is the correct setting for a language of widgets, and, going
through Curry--Howard, the corresponding type theory is a linear,
linear temporal type theory.

\subsection{Widget API}
To work with widgets, we define a API which mirrors how one would work
with a browser level GUI:

\widgetAPI
\noindent
The first two commands creates and deletes widgets, respectively. The
$\multimap$ should be understood as \emph{state passing}. We read the
type of $\sym{newWidget}$ as ``consuming no state, produce a new
identifier index and a widget with that identifier index''. The
identifier indices are used to ensure the correct behavior when using
the \sym{split} and \sym{join} commands explained below. The existential
quantification describes the \emph{non-deterministic} creation of an
identifier index. The use of non-determinism is crucial in our
language and will be explaining in further detail in
\autoref{sec:formal_semantics}. Since $\langname$ has a linear type
system, we need an explicit construction to delete state. For widgets,
this is $\sym{dropWidget}$. The type is read as ``for any identifier
index, consume a widget with that identifier index and produce
nothing''.

The first command that modifies the state of a widget is
$\sym{setColor}$. Here we see the adjoint nature of the calculus with
$\sym{F}\;\sym{Color}$. A color is itself \emph{not} a linear thing,
and as such, to use it in the linear setting, we apply $\sym{F}$,
which moves from the non-linear (Cartesian) fragment and into the
linear fragment. The second new thing is the linear product
$\otimes$. This differs from the regular non-linear product in that we
do not have projection maps. Again, because of the linearity of our
language,  we cannot just discard state. We can now read the type of
$\sym{setColor}$ as ``Given a color and a identified widget, consume
both and produce a new widget''. The produced widget is the same as
the consumed widget, but with the color attribute updated.

The next two commands, $\sym{onClick}$ and $\sym{onKeypress}$, are
roughly similar. Both register a handle on the widget, for a mouse
click and a key press, respectively. Here we see the first use of the
$\Diamond$ modality, which represents an \emph{event}. The type
$\Diamond A$ represents that \emph{at some point in the future} we
will receive something of type $A$. Importantly, because of the
asynchronous nature of $\langname$, we do not know \emph{when} it
happens. We can then read the type of $\sym{onClick}$ as ``Consuming
an identified widget, produce an updated widget together with a mouse
click event''. The same holds for the type of $\sym{onKeypress}$
except a key press event is produced.

The two commands $\sym{out}$ and $\sym{into}$ allows us to work with
events in a more precise way. Given an event, we can use $\sym{out}$
to ``unfold'' it into an existential. The $\at$ connective describes a
type that is only available at a certain timestep, i.e., $A \at n$
means ``at the timestep $n$, a term of type $A$ will be
available''. The $\sym{into}$ commands is the reverse of $\sym{out}$
and turns an existential and an $\at$ into an event.

So far, we have only applied commands to a widget in the current
timestep, but to program appropriately with widgets, we should be able
to react to events and apply commands ``in the future''. This is
exactly what the $\sym{split}$ and $\sym{join}$ commands allows us to
do. The type of $\sym{split}$ is read as ``Given any time step and any
identified widget, split the widget into all the states \emph{before}
that time and the widget \emph{at} that time''. We denote the
collection of states before a given time a \emph{prefix} and give it
the type $\sym{Prefix}$. Given the state of the widget at a given
timestep, we can now apply commands \emph{at that timestep}. Note that
both the prefix and the widget is indexed by the same identifier
index. This is to ensure that when we use $\sym{join}$, we combined
the correct prefix and future.

\subsection{Widget Programming}
To see the API in action, we now proceed with several examples of
widget programming. For each example, we will add a comment on each
line with the type of variables, and then explain the example in
text afterwards.

One of the simplest things we can do with a widget is to perform some
action when the widget is clicked. In the following example, we
register a handler for mouse clicks, and then we use the click event
to change the color of the widget to red at the time of the click. To
do this, we use the $\sym{out}$ map to get the time of the event, then
we split the widget and apply $\sym{setColor}$ at that point in the
future.

\turnRedOnClick
\noindent
To see why this type checks, we go through the example line by line.
In line 3, we register a handle for a mouse click on the widget. In
line 4, we turn the click event into an existential. In line 5, we get
$c_2$ which is a binding that is only available at the timestep
$x$. Since we only need the \emph{time} of the click, we discharge the
click itself in line 6. In line 7 and 8, we split the
widget using the timestep $x$ and bind $w_3$ to the state of the
widget at that timestep. In line 9, we change the color of the widget
to red at $x$ and in line 10 we recompose the widget.

In general, we will allow pattern matching in eliminations and since widget
identity indices can always be inferred, we will omit them. In this
style, the above example become:

\turnRedOnClickSugar
\noindent
We will use the same sugared style throughout the rest of the
examples.

The above example turns a widget red exactly at the time of the
mouse click, but will not do anything with successive clicks. To also
handle further mouse clicks, we must register an event handler
\emph{recursively}. This is a simple modification of the previous code:

\keepTurningRed
\noindent
By calling itself recursively, this function will make sure a widget
will always turn red on a mouse click.

To understand the difference between two above examples, consider the
following code

\clickThenKeep
\noindent
where $w$ is some widget and $turnBlueOnClick$ is the obvious
modification of the above code. On the first click, the widget will
turn blue, on the second click it will turn red and on any subsequent
click, it will keep turning red, i.e., stay red unless further
modified.

When working with widgets, we will often register multiple handlers on
a single widget.  For example, a widget should have one behavior for a
click and another behavior for a key press.  To choose between two
events, we use the $\sym{select}$ construction.  This construction is
central to our language and how to think about a push-based reactive
language.

Given two events, $t_1 : \Diamond A, t_2 : \Diamond B$, there are
three possible behaviors: Either $t_1$ returns first, and we wait for
$t_2$ or $t_2$ returns first and we wait for $t_1$ or they return at
the same time. In general, we want to select between $n$ events, but
if we need to handle all possible cases, this will give $2^n$ cases,
so to keep the syntax linear in size, we will omit the last case. In
the case events \emph{actually} return at the same time, we do a
non-deterministic choice between them. The syntax for $\sym{select}$
is
\begin{align*}
  \selectterm{t_1}{x}{t_1'}{t_2}{y}{t_2'}
\end{align*}
where
$x : A, y : B, t_1' : A \multimap \Diamond B \multimap \Diamond C$ and
$t_2' : B \multimap \Diamond A \multimap \Diamond C$. The second
important thing to understand when working with $\sym{select}$ is that
given we are working with events, we do not actually know at which
timestep the events will trigger, and hence, we do not know what the
(linear) context contains. Thus, when using $\sym{select}$, we will
\emph{only} know either $a: A, t_2:\Diamond B$ or
$t_1 : \Diamond A, b : B$. We can think of the $\sym{select}$ rule a
\emph{case-expression} that must respect time.

In the following example, we register two handlers, one for clicks and
one for key presses, and change the color of the widget based on which
returns first.

\widgetSelect
\noindent
In line 3 and 4, we register the two handlers. In line 5-9, we use the
\sym{select} construction. In the first case, the click happens first
and we return the color red. In the second case, the key press happens
first and we return the color blue. In both cases, because of the
linear nature of the language, we need to do a let-binding to
discharge the unit and the char, respectively. In line 10, we turn the
color event into an existential. In line 11, we use the timestep of the
color event to split the widget, and in line 12, we change the color
of the widget at that time and recompose it.

To see how $\langname$ differs from more traditional synchronous FRP
languages, we will examine how to encode a kind of streams. Since
our language is \emph{asynchronous}, the stream type must be encoded as
\begin{equation*}
  \Str{A} := \nu \alpha. \Diamond (A \otimes \alpha)
\end{equation*}
This asynchronous stream will \emph{at some point in the future} give
a head and a tail. We do not know when the first element of the stream
will arrive, and after each element of the stream is produced, we will
wait an indeterminate amount of time for the next element. The reason
why the stream type in $\langname$ must be like this is essentially
that we want a \emph{push-based} language, i.e., we do not want to
wake up and check for new data in each program cycle. Instead, the
program should sleep until new data arrives.

To show the difference between the asynchronous stream and the more
traditional synchronous stream, we will look at some examples. With a
traditional stream, a standard operation is zipping two streams: that
is, given $\Str{A}$ and $\Str{B}$, we can produce $\Str{A \times B}$,
which should be the element-wise pairing of the two streams. It should
be clear that this is not possible for our asynchronous streams.
Given two streams, we can wait until the first stream produces an
element, but the second stream may only produce an element after a
long period of time. Hence, we would need to buffer the first element,
which is not supported in general. Remember, when using
$\sym{select}$, we can not use any already defined linear variables,
since we do not know if they will be available in the future.

Rather than zipping stream, we can instead do a kind of
\emph{interleaving} as shown below. We use $\sym{fold}$ and
$\sym{unfold}$ to denote the folding and unfolding of the fixpoint.

\interleaveEx
\noindent
Here, we use $\sym{select}$ to choose between which stream returns
first, and then we let that element be the first element of the new
stream.

On the other hand, some of the traditional FRP functions on streams
can be translated. For instance, we can map of function over a stream,
given that \emph{it is available at each step in time}:
\streamMap
\noindent
The type $\sym{F}(\sym{G} (A \multimap B))$ is read as a linear
function with no free variables that can be used in a non-linear
fashion, i.e., duplicated.  This restriction to such ``globally
available functions'' is reminiscent of the ``box'' modality in Bahr
et al.~\cite{bahr2019simply} and Krishnaswami~
\cite{krishnaswami13frp}, and the $\sym{F}$ and $\sym{G}$ construction
can be understood as decomposing the box modality into two separate
steps. This relationship will be made precise in the logical
interpretation of $\langname$ in \autoref{sec:formal_calculus}

As a final example, we will show how to dynamically update the GUI,
i.e., how to add new widgets on the fly. Before we can give the
example, we need to extend our widget API, to allow composition of
widgets. To that end, we add the $\sym{vAttach}$ command to our
API.
\begin{equation*}
  \sym{vAttach} : \forall(i,j:\sym{Id}), \sym{Widget}\;i \multimap \sym{Widget}\;j \multimap \sym{Widget}\;i
\end{equation*}
This command should be understood as an abstract version the
$\sym{div}$ tag in HTML. In the following example, we think of the
widget as a simple button that when clicked, will create a new
button. When \emph{any} of the buttons gets clicked, a new button gets
attached.

\buttonStack
\noindent
The important step here is in line 6 and 7. Here the new button is
attached at the time of the mouse click, and $buttonStack$ is called
recursively on the newly created button.


\section{Formal Calculus}
\label{sec:formal_calculus}
This sections gives the formal rules, the meta-theory and the logical
interpretation of $\langname$. Briefly, the language is an mixed
linear-non-linear adjoint calculus in the style of
Benton--Wadler~\cite{benton95mixed,Benton1996LinearLM}. The non-linear
fragment, also called Cartesian in the following, is a minimal simply
typed lambda calculus whereas the linear fragment contains several
non-standard judgments used for widget programming.

\subsection{Contexts and Typing Judgments}
We have three separate typing judgments: one for indices, one for
Cartesian (non-linear) terms, and one for linear terms. These are
distinguished by a subscript on the turnstile, $i$ for indices, $c$
for Cartesian terms and $l$ for linear terms. These depend on
different contexts. The index judgment depends only on a index
context, whereas the Cartesian and linear judgments depends on both an
index and a linear and/or a Cartesian context. The rules for context
formation is given in \autoref{fig:context_rules}. These are mostly
standard except for the dependence on a previously defined context and
the fact that the linear context contains variables of the form
$a :_\tau A$, i.e., temporal variables. The judgment $a :_\tau A$ is
read as ``$a$ has the type $A$ at the timestep $\tau$''. In the linear
setting we will write $a : A$ instead of $a :_0 A$, i.e., a judgment
in the current timestep.

\begin{figure}
  \centering
  \begin{mathpar}
    \begin{array}{lcc}
      \mbox{Indices:}
      & \inferrule*{~}{\wfcxti{\cdot}}
      & \inferrule*
        {\wfcxti{\Theta} \\ s \not\in \dom{\Theta} \\ \sigma \in
      \{\Id,\Time\}}
      {\wfcxti{\Theta,s : \sigma}} \\
      \mbox{Cartesian:}
      & \inferrule*{~}{\wfcxtc{\cdot}}
      & \inferrule*
    {\wfcxtc{\Theta}{\Gamma} \\ x \not\in\dom{\Gamma} \\
      \Theta \vdash_c X}
      {\wfcxtc{\Theta}{\Gamma,x:X}} \\
      \mbox{Linear:}
      & \inferrule*{~}{\wfcxtl{\cdot}}
      & \hspace{5pt} \inferrule*
    {\wfcxtl{\Theta}{\Delta} \\ x \not\in\dom{\Delta} \\
      \Theta \vdash_l A \\ \hastypei{\Theta}{\tau}{\Time}}
      {\wfcxtl{\Theta}{\Delta,a:_\tau A}}
    \end{array}
  \end{mathpar}
  \caption{Context Formation}
  \label{fig:context_rules}
\end{figure}

The index judgment describes how to introduce indices. The typing
rules are given in \autoref{fig:index_typing_rules}. The judgment
$\hastypei{\Theta}{\tau}{\sigma}$ contains a single context, $\Theta$,
for index variables. There are only two sorts of indices, identifiers
and timesteps.

\begin{figure}
  \centering
  \textbf{Index Judgments:}
    \begin{mathpar}
    \inferrule*[right=Time]
    {\tau \in \Time}
    {\hastypei{\Theta}{\tau}{\Time}}
    \and
    \inferrule*[right=Id]
    {\iota \in \Id}
    {\hastypei{\Theta}{\iota}{\Id}}
    \and
    \inferrule*[right=Var]
    {i : \sigma \in \Theta}
    {\hastypei{\Theta}{i}{\sigma}}
  \end{mathpar}
  \caption{Index Typing rules}
  \label{fig:index_typing_rules}
\end{figure}

The Cartesian judgment describes the Cartesian, or non-linear,
fragment. The typing rules are given in
\autoref{fig:cartesian_typing_rules}. This is a minimal simply typed
lambda calculus with the addition of the $\sym{G}$ type, used for
moving between the linear and Cartesian fragment, and explained
further below. The judgment $\hastypec{\Theta}{\Gamma}{t}{A}$ has two
contexts; $\Theta$ for indices and $\Gamma$ for Cartesian variables.

\begin{figure}
  \centering
  \textbf{Cartesian Judgments:}
  \begin{mathpar}
    \inferrule*[right=($\CUnit$-I)]
    {~}
    {\hastypec{\Theta}{\Gamma}{\cunit}{\CUnit}}
    \and
    \inferrule*[right=(Var)]
    {(x:X) \in \Gamma}
    {\hastypec{\Theta}{\Gamma}{x}{X}}
    \and
    \inferrule*[right=($\to$-I)]
    {\hastypec{\Theta}{\Gamma,x:X}{e}{Y}}
    {\hastypec{\Theta}{\Gamma}{\lambda x.e}{X \to Y}}
    \and
    \inferrule*[right=($\to$-E)]
    {\hastypec{\Theta}{\Gamma}{e_1}{X \to Y} \\
      \hastypec{\Theta}{\Gamma}{e_2}{X}}
    {\hastypec{\Theta}{\Gamma}{e_1e_2}{Y}}
    \and
    \inferrule*[right=($\sym G$-I)]
    {\hastypel{\Theta}{\Gamma}{\cdot}{t}{A}}
    {\hastypec{\Theta}{\Gamma}{\G{t}}{\G{A}}}
  \end{mathpar}
  \caption{Cartesian Typing rules}
  \label{fig:cartesian_typing_rules}
\end{figure}

The linear fragment is most of the language, and the typing rules are
given in \autoref{fig:linear_typing_rules}. The judgment is done w.r.t
three contexts, $\Theta$ for index variables, $\Gamma$ for Cartesian
variables and $\Delta$ for linear variables.  Many of the rules are
standard for a linear calculus, except for the presence of the
additional contexts. We will not describe the standard rules any
further.

\begin{figure}
  \centering
  \textbf{Linear Judgments:}
  \begin{mathpar}
    \inferrule*[right=(Var)]
    {~}
    {\hastypel{\Theta}{\Gamma}{a:A}{a}{A}}
    \and
    \inferrule*[right=($\multimap$-I)]
    {\hastypel{\Theta}{\Gamma}{\Delta,a:A}{t}{B}}
    {\hastypel{\Theta}{\Gamma}{\Delta}{\lambda a.t}{A \multimap B}}
    \and
    \inferrule*[right=($\multimap$-E)]
    {\hastypel{\Theta}{\Gamma}{\Delta_1}{t_1}{A \multimap B} \\
      \hastypel{\Theta}{\Gamma}{\Delta_2}{t_2}{A}}
    {\hastypel{\Theta}{\Gamma}{\Delta_1,\Delta_2}{t_1t_2}{B}}
    \and
    \inferrule*[right=($\LUnit$-I)]
    {~}
    {\hastypel{\Theta}{\Gamma}{\cdot}{\lunit}{\LUnit}}
    \and
    \inferrule*[right=($\LUnit$-E)]
    {\hastypel{\Theta}{\Gamma}{\Delta_1}{t_1}{\LUnit} \\
      \hastypel{\Theta}{\Gamma}{\Delta_2}{t_2}{C}}
    {\hastypel{\Theta}{\Gamma}{\Delta_1,\Delta_2}{\letunit{t_1}{t_2}}{C}}
    \and
    \inferrule*[right=($\otimes$-I)]
    {\hastypel{\Theta}{\Gamma}{\Delta_1}{t_1}{A} \\
      \hastypel{\Theta}{\Gamma}{\Delta_2}{t_2}{B}}
    {\hastypel{\Theta}{\Gamma}{\Delta_1,\Delta_2}{\pair{t_1}{t_2}}{A \otimes B}}
    \and
    \inferrule*[right=($\otimes$-E)]
    {\hastypel{\Theta}{\Gamma}{\Delta_1}{t_1}{A \otimes B} \\
      \hastypel{\Theta}{\Gamma}{\Delta_2, a:A, b:B}{t_2}{C}}
    {\hastypel{\Theta}{\Gamma}{\Delta_1,\Delta_2}{\letpair{a}{b}{t_1}{t_2}}{C}}
    \and
    \inferrule*[right=($\Event$-I)]
    {\hastypel{\Theta}{\Gamma}{\Delta}{t}{A}}
    {\hastypel{\Theta}{\Gamma}{\Delta}{\event\;t}{\Event A}}
    \and
    \inferrule*[right=($\Event$-E)]
    {\hastypel{\Theta}{\Gamma}{\Delta}{t_1}{\Event A}\\
      \hastypel{\Theta}{\Gamma}{a:A}{t_2}{\Event B}}
    {\hastypel{\Theta}{ }{\Delta}{\letevent{a}{t_1}{t_2}}{\Event B}}
    \and
    \inferrule*[right=(@-I)]
    {\hastypei{\Theta}{\tau}{\Time} \\
      \hastypelt{\Theta}{\Gamma}{\Delta}{t}{\tau}{A}}
    {\hastypel{\Theta}{\Gamma}{\Delta}{t \at \tau}{A \at \tau}}
    \and
    \inferrule*[right=(@-E)]
    {\hastypei{\Theta}{t}{\Time} \\
      \hastypel{\Theta}{\Gamma}{\Delta_1}{t_1}{A \at \tau} \\
      \hastypel{\Theta}{\Gamma}{\Delta_2,a:_{\tau} A}{t_2}{B}}
    {\hastypel{\Theta}{\Gamma}{\Delta_1,\Delta_2}{\letat{a}{\tau}{t_1}{t_2}}{B}}
    \and
    \inferrule*[right=($\sym{G}$-E)]
    {\hastypec{\Theta}{\Gamma}{e}{\G{A}}}
    {\hastypel{\Theta}{\Gamma}{\cdot}{\runG\;e}{A}}
    \and
    \inferrule*[right=($\sym F$-I)]
    {\hastypec{\Theta}{\Gamma}{e}{X}}
    {\hastypel{\Theta}{\Gamma}{\cdot}{\F{e}}{\F{x}}}
    \and
    \inferrule*[right=($\sym F$-E)]
    {\hastypel{\Theta}{\Gamma}{\Delta_1}{t_1}{\F{X}} \\
      \hastypel{\Theta}{\Gamma,x:X}{\Delta_2}{t_2}{B}}
    {\hastypel{\Theta}{\Gamma}{\Delta_1,\Delta_2}{\letF{x}{t_1}{t_2}}{B}}
    \and
    \inferrule*[right=($\forall$-I)]
    {\hastypel{\Theta,i : \sigma}{\Gamma}{\Delta}{t}{A}}
    {\hastypel{\Theta}{\Gamma}{\Delta}{\Lambda (i:\sigma).t}{\forall (i:\sigma).A}}
    \and
    \inferrule*[right=($\forall$-I)]
    {\hastypei{\Theta}{s}{\sigma} \\
      \hastypel{\Theta}{\Gamma}{\Delta}{t}{\forall (i:\sigma).A}}
    {\hastypel{\Theta}{\Gamma}{\Delta}{t_s}{\sub{s}{i}{A}}}
    \and
    \inferrule*[right=($\exists$-I)]
    {\hastypei{\Theta}{s}{\sigma} \\
      \hastypel{\Theta}{\Gamma}{\Delta}{t}{\sub{s}{i}{A}}}
    {\hastypel{\Theta}{\Gamma}{\Delta}{\exterm{s}{t}}{\exists(i:\sigma).A}}
    \and
    \inferrule*[right=($\exists$-E)]
    {\hastypel{\Theta}{\Gamma}{\Delta_1}{t_1}{\exists (i:\sigma).A} \\
      \hastypel{\Theta,s:\sigma}{\Gamma}{\Delta_2,a : \sub{s}{i}{A}}{t_2}{B}}
    {\hastypel{\Theta}{\Gamma}{\Delta_1,\Delta_2}{\letexists{s}{a}{t_1}{t_2}}{B}}
    \and
    \inferrule*[right=(select)]
    {\hastypel{\Theta}{\Gamma}{\Delta_1}{t_1}{\Event A} \\
      \hastypel{\Theta}{\Gamma}{\Delta_2}{t_2}{\Event B} \\
      \hastypel{\Theta}{\Gamma}{a:A,t_2 : \Event B}{t_1'}{\Event C}\\
      \hastypel{\Theta}{\Gamma}{b:B,t_1 : \Event A}{t_2'}{\Event C}}
    {\hastypel{\Theta}{\Gamma}{\Delta_1,\Delta_2}{\selectterm{t_1}{a}{t_1'}{t_2}{b}{t_2'}}{\Event C}}
    \and
    \inferrule*[right=(delay)]
    {\hastypei{\Theta}{\tau}{\Time} \\
      \Delta' = \attime{\Delta}{\tau} \\
      \hastypel{\Theta}{\Gamma}{\Delta'}{t}{A}}
    {\hastypelt{\Theta}{\Gamma}{\Delta}{t}{\tau}{A}}
    \and
    \inferrule*[right=($\LUnit_\tau$-E)]
    {\hastypei{\Theta}{\tau}{\Time} \\
      \hastypelt{\Theta}{\Gamma}{\Delta_1}{t_1}{\tau}{\LUnit} \\
      \hastypel{\Theta}{\Gamma}{\Delta_2}{t_2}{B}}
    {\hastypel{\Theta}{\Gamma}{\Delta_1,\Delta_2}{\letunitt{\tau}{t_1}{t_2}}{B}}
    \and
    \inferrule*[right=($\otimes_\tau$-E)]
    {\hastypei{\Theta}{\tau}{\Time} \\
      \hastypelt{\Theta}{\Gamma}{\Delta_1}{t_1}{\tau}{A\otimes B} \\
      \hastypel{\Theta}{\Gamma}{\Delta_2,a:_\tau A,b:_\tau B}{t_2}{C}}
    {\hastypel{\Theta}{\Gamma}{\Delta_1,\Delta_2}{\letpairt{a}{b}{\tau}{t_1}{t_2}}{C}}
  \end{mathpar}
    \caption{Linear Typing rules}
  \label{fig:linear_typing_rules}
\end{figure}
The first non-standard rule is for $\Diamond$. The introduction and
elimination rules follow from the fact that $\Diamond$ is a non-strong
monad. More interesting is the $\sym{select}$ rule. Here we see the
formal rule corresponding to the informal explanation in
\autoref{sec:language}. The important thing here is that we can not
use any previously defined linear variable when typing $t_1'$ and
$t_2'$, since we do not actually know \emph{when} the typing
happens. Note, we can see the $\sym{select}$ rule as a binary version
of the $\Diamond$ let-binding. This could additionally be extended to
a $n$-ary version, but we do not do this in our core calculus. The
rules for $A \at \tau$ shows how to move between the judgment
$t : A \at \tau$ and $t :_\tau A$. That is, moving from knowing in the
current timestep that $t$ will have the type $A$ at time $\tau$ and
knowing at time $\tau$ that $t$ has type $A$. The ($\F$-I), ($\F$-E),
($\G$-I) and ($\G$-E) rules show the adjoint structure of the
language. The ($\G$-I) rule takes a closed linear term of type $A$ and
gives it the Cartesian type $\G\;A$. Note, because it has no free
linear variables, it is safe to duplicate. The ($\G$-E) rule lets us
get an $A$ without needing any linear resources. Conversely, the
($\F$-I) rule embeds a intuitionistic term into the linear fragment
and the ($\F$-E) rule binds an intuitionistic variable to let us
freely use the value. The quantification rules ($\exists$ and
$\forall$) should also be familiar, except for the additional
contexts.  The (\sym{Delay}) rule shows what happens when we actually
\emph{know} the timestep. The important part is
$\Delta' = \Delta\downarrow^\tau$ which means two things. One, all the
variables in $\Delta$ are on the form $a :_\tau A$, i.e., judgments at
time $\tau$ and two, we shift $\Delta$ into the future such that all
the variables of $\Delta'$ is of the form $a : A$. The way to
understand this is, if all the variables in $\Delta$ are typed at time
$\tau$ and the conclusion is at time $\tau$, it is enough to ``move
to'' time $\tau$ and then type w.r.t that timestep. Finally, we have
($\LUnit_\tau$-E) and ($\otimes_\tau$-E). These allow us to work with
linear unit and products at time $\tau$. These are added explicitly
since they can not be derived by the other rules, and are needed for
typing certain kinds of programs, e.g., see the typing on
$\mathit{turnRedOnClick}$ below.

\subsection{Unfolding Events to Exists}
The type system as given above contains both $\Diamond A$ and
$A \at k$, as two different way to handle time. The former denotes
that something of type $A$ will arrive at \emph{some} point in the
future, whereas the latter denotes that something of type $A$ arrives
at a \emph{specific} point in the future. The strength of $\Diamond$
is that is gives easy and concise typing rules, whereas the strength
of $A \at k$ is that it allows for a more precise usage of time. To
connect these two, we add the linear isomorphism
$\Diamond A \cong \exists k. A \at k$ to our language, which is
witnessed by $\sym{out}$ and $\sym{into}$, as part of the widget
API. This isomorphism is true semantically, but can not be derived in
the type system. In particular, this isomorphism allows the
$\sym{select}$ rule to be given with $\Diamond$, while still allowing
the use timesteps when working with the resulting event. If we were
to give the equivalent definition using timesteps, one would need to
have some sort of \emph{constraint system} for deciding which events
happens first. Avoiding such a constraint systems also allows for a
much simpler implementation, as everything is our type system can be
inferred.

\subsection{Meta-theory of Substitution}
The meta-theory of $\langname$ is given in the form of a series of
substitution lemmas. Since we have three different contexts, we will
end up with six different substitutions into terms. The Cartesian to
Cartesian, Cartesian to linear and linear to linear are the usual
notion of mutual recursive substitution. More interesting is the
substitution of indices into Cartesian and linear terms and types. We
prove the following lemma, showing that typing is preserved under
index substitution:
\begin{lemma}[Preservation of Typing under Index Substitution]
  \begin{mathpar}
    \inferrule*
    {\zeta : \Theta' \to \Theta \\
      \hastypec{\Theta}{\Gamma}{e}{X}}
    {\hastypec{\Theta'}{\zeta(\Gamma)}{\zeta(e)}{\zeta(X)}}
    \and
    \inferrule*
    {\zeta : \Theta' \to \Theta \\
      \hastypelt{\Theta}{\Gamma}{\Delta}{t}{\tau}{A}}
    {\hastypelt{\Theta'}{\zeta(\Gamma)}{\zeta(\Delta)}{\zeta(t)}{\tau}{\zeta(A)}}
  \end{mathpar}
\end{lemma}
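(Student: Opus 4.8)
The plan is to argue by mutual structural induction on the Cartesian and linear typing derivations, where the linear half of the statement is read for both the plain judgment $\hastypel{\Theta}{\Gamma}{\Delta}{t}{A}$ and the timestamped judgment $\hastypelt{\Theta}{\Gamma}{\Delta}{t}{\tau}{A}$ (these are connected by the (delay) rule, so a derivation of one nests derivations of the other). An index context morphism $\zeta : \Theta' \to \Theta$ is understood as a function sending each $(s : \sigma) \in \Theta$ to an index term with $\hastypei{\Theta'}{\zeta(s)}{\sigma}$, extended homomorphically to index terms, types, Cartesian terms, linear terms, and all three kinds of context. The preliminary step is to check that $\hastypei{\Theta}{\tau}{\sigma}$ implies $\hastypei{\Theta'}{\zeta(\tau)}{\sigma}$: immediate from the three index rules, since the constants from \textsc{Time} and \textsc{Id} are untouched by $\zeta$ and the variable case is exactly the defining property of $\zeta$.

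Before the main induction I would record the commutation facts the rules force us to use. First, index substitution commutes with the time-shift on linear contexts, $\zeta(\attime{\Delta}{\tau}) = \attime{\zeta(\Delta)}{\zeta(\tau)}$, since $\attime{(-)}{\tau}$ only rewrites timestamp annotations, on which $\zeta$ acts uniformly. Second, it commutes with the capture-avoiding index substitution $\sub{s}{i}{A}$ occurring in the quantifier rules: after $\alpha$-renaming $i$ away from $\dom{\Theta'}$ one has $\zeta(\sub{s}{i}{A}) = \sub{\zeta(s)}{i}{\zeta^{+}(A)}$, where $\zeta^{+}$ is the lift of $\zeta$ over the bound index --- the usual substitution-commutes-with-substitution lemma. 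Third, $\zeta$ lifts to a morphism $\zeta^{+} : \Theta', i:\sigma \to \Theta, i:\sigma$ that is the identity on $i$ and agrees with $\zeta$ (up to weakening) elsewhere, which is what lets the argument descend under $\forall$- and $\exists$-binders. Each of these is a routine syntactic check on the grammars of indices, types, and contexts.

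The main step is the simultaneous induction on the derivation. The rules ($\sym{G}$-I), ($\sym{G}$-E), ($\sym{F}$-I) and ($\sym{F}$-E) pass between the Cartesian and linear fragments, so the two judgments must be handled together; each of these four is discharged directly by the cross-fragment induction hypothesis. For the purely structural rules --- the variable rules, the $\multimap$, $\otimes$, $\LUnit$ and $\Event$ introductions and eliminations, and (if taken as part of the core calculus) the primitive API constants like $\sym{newWidget}$ and $\sym{split}$, whose types are closed and hence fixed by $\zeta$ --- one pushes $\zeta$ through the conclusion, applies the induction hypothesis to every premise (using that $\zeta$ respects context splitting $\Delta_1,\Delta_2$ and context extension), and reapplies the same rule. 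The rules carrying an index side condition --- (@-I), (@-E), ($\forall$-E), ($\exists$-I), (delay), ($\LUnit_\tau$-E), ($\otimes_\tau$-E) --- additionally invoke the preliminary step on the premise $\hastypei{\Theta}{\tau}{\Time}$, plus the first commutation fact on $\attime{\Delta}{\tau}$ for (delay). Finally, the index-binding rules ($\forall$-I) and ($\exists$-E) are replayed with $\zeta^{+}$ in place of $\zeta$ by the third fact, and in ($\exists$-E) the second fact realigns the substituted hypothesis type $\sub{s}{i}{A}$ with the conclusion.

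The part I expect to need the most care is not any individual case but the shared bookkeeping: getting the lift $\zeta^{+}$ and the freshness conditions right under the index binders, and verifying honestly that index substitution commutes with both $\attime{(-)}{\tau}$ and $\sub{s}{i}{A}$ --- in particular that nothing unexpected happens at the sort $\Time$ when $\attime{(-)}{\tau}$ shifts timestamps. These commutations are exactly the lemmas one wants stated and proved in advance; with them in place, every case of the induction is mechanical.
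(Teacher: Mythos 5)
Your proposal is correct and matches the paper's approach: the paper proves this lemma by a ``lengthy but standard induction over the typing tree'' (with full details deferred to the technical appendix), which is exactly the mutual structural induction you describe, including the standard auxiliary commutation lemmas for $\attime{(-)}{\tau}$, for index-into-type substitution, and for lifting $\zeta$ over binders.
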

Both are these (and all other cases for substitution) are proved by a
lengthy but standard induction over the typing tree. See the technical
appendix for full proofs of all six substitution lemmas.

\subsection{Typing Example}
In the following, we go through the formal typing of the
$\mathit{turnRedOnClick}$ example from \autoref{sec:language}. In the
below, we have annotated each line with the contents of the index
context (omitting the $i : \Id$ that is given upfront) and the linear
context.

\turnRedOnClickTyping
\noindent
Most of the above is simple application of elimination rules. In line
4, we add the indices variable $x : \Time$ to the index context. Note
in particular the use of $\LUnit_\tau-E$ in line 6 to discharge
$c_2 :_x \LUnit$.  The point where we actually modify the widget is in
line 9 and 10, where we have the following typing:
\begin{mathpar}
  \inferrule*
  {\inferrule*
    {\Delta_1 \vdash w_3 :_x \sym{Widget}\;i}
    {\Delta_1 \vdash (\sym{setColor}\;(\sym{F}\;\sym{Red})\; w_3) \at x : \sym{Widget}\;i \at x}
    \\
    \inferrule*
    {\Delta_2 \vdash p : \sym{Prefix}\;i\;x \\ \Delta_3 \vdash w_4 : \sym{Widget}\;i \at x}
    {\Delta_2,\Delta_3 \vdash \sym{join}\;i\;x \; (p,w_4) : \sym{Widget}\;i}}
  {\Delta_1,\Delta_2 \vdash \mathbf{let}\; w_4 \; = \; (\sym{setColor}\;(\sym{F}\;\sym{Red})\; w_3) \at x\;\mathbf{in}\;\sym{join}\;i\;x\;(p,w_4): \sym{Widget}\;i}
\end{mathpar}
where
$\Delta_1 = w_3 :_x \sym{Widget}\;i, \Delta_2 = p :
\sym{Prefix}\;i\;x$ and $\Delta_3 = w_4 : \sym{Widget}\;i \at x$.

\subsection{Logical Interpretation}

Our language has a straightforward logical interpretation.

The logic corresponding to the Cartesian fragment is a propositional
intuitionistic logic, following the usual Curry--Howard interpretation.
The logic corresponding to the substructural part of the language is a
linear, linear temporal logic. The single-use condition on variables
means that the syntax and typing rules correspond to the rules of
intuitionistic linear logic (i.e., the first occurrence of linear in
``linear, linear temporal''). However, we do not have a comonadic
exponential modality $!A$ as a primitive. Instead, we follow the
Benton--Wadler approach~\cite{benton95mixed,Benton1996LinearLM} and
decompose the exponential into the composition of a pair of adjoint
functors mediating between the Cartesian and linear logic.

In addition to the Benton--Wadler rules, we have a temporal modality
$\Diamond{A}$, which corresponds to the eventually modality of linear
temporal logic (i.e., the second occurrence of ``linear'' in ``linear,
linear temporal logic''). This connective is usually written $F\,A$ in
temporal logic, but that collides with the $\sym{F}$ modality of the
Benton--Wadler calculus. Therefore we write it as $\Diamond{A}$ to
reflect its nature as a possibility modality (or monad). In our
calculus, the axioms of S4.3 are derivable:
\begin{align*}
  (T) &: A \multimap \Diamond A \\
  (4) &:\Diamond\Diamond A \multimap \Diamond A \\
  (.3)&: \Diamond(A \otimes B) \multimap \Diamond((\Diamond A \otimes B) \oplus \Diamond(A \otimes \Diamond B) \oplus \Diamond(A \otimes B))
\end{align*}

Note that because the ambient logic is linear, intuitionistic
implication $X \to Y$ is replaced with the linear implication
$A \multimap B$, and intuitionistic conjunction $X \wedge Y$ is
replaced with the linear tensor product $A \otimes B$. It is easy to
see that the first two axiom corresponds to the monadic structure of
$\Diamond$, and the .3 axiom corresponds to the $\sym{select}$ rule
(with our syntax for $\sym{select}$ corresponding to immediately
waiting for and then pattern-matching on the sum type).  In the
literature, the .3 axiom is often written in terms of the box modality
$\Box{A}$~\cite{blackburn2002modal}, but we present it here in a
(classically) equivalent formulation mentioning the eventually
modality $\Diamond{A}$.

We do not need to offer an additional explicit box modality $\Box A$,
since the decomposition of the exponential $\sym{F}(\sym{G} A)$ from
the linear-non-linear calculus serves that role.

In our system, \emph{we do not want to offer} the next-step operator
$\Later A$. Since we want to model asynchronous programming, we do not
want to include a facility for permitting programmers to write
programs which wake up in a specified amount of time. Instead, we only
offer an iterated version of this connective, $A \at n$, which can be
interpreted as $\Later^n A$, and our term syntax does not have any
numeric constants which can be used to demand a specific delay.

Finally, the universal and existential quantifiers (in both the
intuitionistic and linear fragments) are the usual quantifier
rules for first-order logic.


\section{Formal Semantics}
\label{sec:formal_semantics}
In this section we will present a denotational model for $\langname$.
The model is a linear-non-linear (LNL)
hyperdoctrine~\cite{maietti00iltcats,haim16linhyper} with the
non-linear part being $\Set$ and the linear part being the category of
internal relations over a suitable ``reactive'' category. The
hyperdoctrine structure itself is used to interpret the quantification
over indices.  It many ways this model is entirely standard, and the
most interesting thing is the reactive base category and the
interpretation of widgets. It is well known that any symmetric
monoidal closed category (SMCC) models multiplicative intuitionistic
linear logic (MILL), and it is similarly well known that the category
of relations over $\Set$ can be give the structure of a SMCC by using
the Cartesian product as both the monoidal product and monoidal
exponential. This construction lift directly to any category of
internal relations over a category that is suitably ``\Set-like'',
i.e., a topos. Our base category is a simple presheaf category, and
hence, we use this construction to model the linear fragment of
$\langname$.

\subsection{The Base Reactive Category}
The base reactive category is where the notion of time will arise and
is it this notion that will be lifted all the way up to the LNL
hyperdoctrine. The simplest model of ``time'' is $\Set^\nats$, which
can be understood as ``sets through
time''~\cite{maclane1994sheaves}. This can indeed by used as a model
for a reactive setting, but for our purposes it is too simple, and
further, depending on which ordering is considered for $\nats$, may
have undesirable properties for the reactive setting. Instead, we use
the only slightly more complicated $\Set^{\nats+1}$, henceforth
denoted $\rcat$, where the ordering on $\nats + 1$ is the discrete
ordering on $\nats$ and $1$ is related to everything else. Adding this
``point at infinity'' allows global reasoning about objects, an
intuition that is further supported by the definition of the
sub-object classifier below. Further, this model is known to be able
to differentiate between least and greatest
fixpoints~\cite{graulund2018}, and even though we do not use this for
$\langname$, we consider it a useful property for further work (see
\autoref{sec:related_work}). Objects in $\rcat$ can be visualized as
\begin{center}
  $A =$
  \begin{tikzcd}
    & A_\infty \arrow[dl,"\pi_1",swap]\arrow[d,"\pi_2"]\arrow[dr] \\
    A_0 & A_1 & \cdots
  \end{tikzcd}
\end{center}
We can think of $A_\infty$ as the global view of the object and $A_n$
as the local view of the object at each timestep. Morphisms are
natural transformations between such diagrams and the naturality
condition means that having a map from $A_\infty$ to $B_\infty$ must
also come with coherent maps at each timestep.

In $\rcat$ we define two endofunctors, which can be seen as describing
the passage of time:
\begin{definition}
  \label{def:rcat_later_prev}
  We define the \emph{later} and \emph{previous} endofunctors on
  $\rcat$, denoted $\Later$ and $\Prev$, respectively:
  \begin{align*}
    (\Later A)_n :=
    \begin{cases}
      1 & n = 0 \\
      A_{n'} & n = n'+1 \\
      A_\infty & n = \infty
    \end{cases}
             && (\Prev A)_n :=
                \begin{cases}
                  A_{n+1} & n \neq \infty \\
                  A_\infty & n = \infty
                \end{cases}
  \end{align*}
\end{definition}
Note that when we apply the later functor, the global view does not
change, but the local views are shifted forward in time.

\begin{theorem}
  The later and previous endofunctors form an adjunction:
  \begin{equation*}
    \Prev \vdash \Later
  \end{equation*}
\end{theorem}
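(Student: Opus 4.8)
The plan is to construct an explicit natural isomorphism between the hom-sets $\rcat(\Prev A, B)$ and $\rcat(A, \Later B)$, working componentwise. First I would spell out what a morphism in $\rcat = \Set^{\nats+1}$ is: since an object $A$ consists of a set $A_\infty$, sets $A_n$ for $n \in \nats$, and structure maps $A_\infty \to A_n$, a morphism $A \to B$ is a family of functions $f_\infty \colon A_\infty \to B_\infty$ together with $f_n \colon A_n \to B_n$, commuting with all structure maps (one naturality square per arrow $\infty \to n$ of the index category). Feeding $\Prev A$ and $\Later B$ into this and using Definition~\ref{def:rcat_later_prev}, a morphism $\Prev A \to B$ is precisely the data of $f_\infty \colon A_\infty \to B_\infty$ and $f_n \colon A_{n+1} \to B_n$ for each $n \in \nats$, subject to the squares $\mathrm{str}^B_n \circ f_\infty = f_n \circ \mathrm{str}^A_{n+1}$; a morphism $A \to \Later B$ is the data of $g_\infty \colon A_\infty \to B_\infty$, a forced map $A_0 \to 1$ carrying no information, and $g_{n+1} \colon A_{n+1} \to B_n$ for each $n$, subject to the analogous squares, the one at time $0$ holding automatically because $1$ is terminal.

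With the two descriptions side by side the bijection is immediate: send $(f_\infty, (f_n)_n)$ to $(g_\infty, g_0, (g_{n+1})_n)$ with $g_\infty := f_\infty$ and $g_{n+1} := f_n$ (and $g_0$ the unique map to $1$), and conversely. The only thing worth checking is that the naturality constraints on the two sides literally coincide under this renaming, which they do once one observes that, by Definition~\ref{def:rcat_later_prev}, the structure map $(\Later B)_\infty \to (\Later B)_{n+1}$ is the structure map $B_\infty \to B_n$ of $B$, and $(\Prev A)_\infty \to (\Prev A)_n$ is $A_\infty \to A_{n+1}$. Naturality of the bijection in $A$ and $B$ is then routine: the correspondence is the identity on underlying component functions, so pre- and post-composition act identically on both presentations.

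Alternatively, and more conceptually, I would note that $\Prev$ is precomposition along the index endofunctor $\sigma$ with $\sigma(n) = n+1$ and $\sigma(\infty) = \infty$ (well defined because the unique arrows $\infty \to n$ are carried to the unique arrows $\infty \to n+1$), so $\Prev = \sigma^{*}$ automatically has a right adjoint given by the right Kan extension $\mathrm{Ran}_\sigma$. Computing the pointwise limit formula — the comma category over $0$ is empty, that over $n+1$ has an initial object, and that over $\infty$ has the initial object $(\infty, \mathrm{id})$ — reproduces exactly the three-way case split defining $\Later$, so $\Later \cong \mathrm{Ran}_\sigma$ and the adjunction follows. Either way there is no real obstacle; the one place demanding a little care is the ``point at infinity'', since that is where $\rcat$ genuinely differs from $\Set^{\nats}$: one must confirm that $\Later$ and $\Prev$ both fix the $\infty$-component and that the maps into it are threaded correctly through the naturality squares (equivalently, that $\infty$ stays initial in the relevant comma category). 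Everything else is bookkeeping.
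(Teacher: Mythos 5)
Your proof is correct. The paper's own ``proof'' is only the remark that the result follows from an examination of the appropriate diagrams, and your first argument --- unpacking a morphism $\Prev A \to B$ and a morphism $A \to \Later B$ into components and naturality squares, and observing that both amount to the same data $f_\infty\colon A_\infty\to B_\infty$ and $f_n\colon A_{n+1}\to B_n$ compatible with the structure maps (the time-$0$ square into $1$ being vacuous) --- is exactly that examination carried out in full, so on this count you match the paper. Your second argument, identifying $\Prev$ as restriction along the successor endomap $\sigma$ of the index poset and $\Later$ as $\mathrm{Ran}_\sigma$ via the pointwise limit formula, is a genuinely different and more conceptual route that the paper does not take: it gets the adjunction for free from Kan-extension machinery, and it explains \emph{why} the definition of $\Later$ has the shape it does --- the $0$-component is terminal because the comma category over $0$ is empty, and the $\infty$-component is fixed because $(\infty,\mathrm{id})$ is initial in its comma category, $\sigma$ carrying the unique arrow $\infty\to n$ to $\infty\to n+1$. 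Both computations check out; the only delicate point in either version is the point at infinity, and you handle it correctly.
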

\begin{proof}
  The proof follows easily from an examination of the appropriate diagrams.
\end{proof}

\begin{definition}
  \label{def:rcat_subobject}
  The sub-object classifier, denoted $\Omega$, in $\rcat$ is the object
  \begin{center}
    \begin{tikzcd}
      & \mathcal{P}(\nats) + 1 \arrow[dl]\arrow[d]\arrow[dr]\\
      \set{0,1} & \set{0,1} & \cdots
    \end{tikzcd}
  \end{center}
\end{definition}
For each $n \in \nats$, $\Omega_n$ denotes whether a given proposition
is true at the $n$th timestep. $\Omega_\infty$ gives the ``global
truth'' of a given proposition. The left injection is some subset of
$\nats$ that denotes at which points in time something is true. The
right injection denotes that something is true ``at the limit'', and
in particular, also at all timesteps. Note, a proposition can be true
at all timesteps but not at the limit. This extra point at infinity is
precisely what allows us the differentiate between least and greatest
fixpoints.

\subsection{The Category of Internal Relations}
To interpret the linear fragment of the language, we will use the
category of internal relations on $\rcat$. Given two objects $A$ and
$B$ in $\rcat$, an \emph{internal relation} is a sub-object of the
product $A \times B$. This can equivalently by understood as a map
$A \times B \to \Omega$. The category of internal relations in the
category where the objects are the objects of $\rcat$ and the
morphisms $A \to B$ are internal relations $A \times B \to \Omega$ in
$\rcat$. We denote the category of internal relations as $\relcat$.

\begin{definition}
  We define a monoidal product and monoidal exponential on $\relcat$ as
  \begin{align*}
    A \otimes B = A \times B && A \multimap B = A \times B
  \end{align*}
\end{definition}

\begin{theorem}
  Using the above definition of monoidal product and exponential,
  $\relcat$ is a symmetric monoidal closed category.
\end{theorem}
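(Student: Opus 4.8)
The plan is to recognise $\relcat$ as the standard category of internal relations over a regular category, and then to read off the closed structure from the fact that $\otimes$ and $\multimap$ are both the Cartesian product of $\rcat$, which makes $\relcat$ compact closed with every object self-dual. Since $\rcat = \Set^{\nats+1}$ is a functor category into $\Set$ it is a topos, hence in particular a regular category with a subobject classifier $\Omega$, with (regular epi, mono) factorisations, and with existential quantification $\exists_f\colon \mathrm{Sub}(X)\to\mathrm{Sub}(Y)$ left adjoint to pullback $f^{\ast}$ along every $f\colon X\to Y$; moreover $\exists$ satisfies the Beck--Chevalley condition and Frobenius reciprocity. Composition in $\relcat$ of $R \subseteq A\times B$ with $S\subseteq B\times C$ is the relational composite $S\circ R = \exists_{\pi_{AC}}\bigl(\pi_{AB}^{\ast}R \cap \pi_{BC}^{\ast}S\bigr)\subseteq A\times C$, and the identity on $A$ is the diagonal $\Delta_A\subseteq A\times A$; associativity and the unit laws are the point at which Beck--Chevalley and Frobenius enter, and this is where the genuine categorical content sits, though the verification is entirely standard.

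For the symmetric monoidal structure I would take $I := 1$, the terminal object of $\rcat$, and let $\otimes$ act on morphisms by $R\otimes S = \setcom{((a,b),(a',b'))}{R(a,a')\wedge S(b,b')}$ --- equivalently, the subobject $R\times S$ transported across the canonical isomorphism $(A\times A')\times(B\times B')\cong(A\times B)\times(A'\times B')$ of $\rcat$. Functoriality of $\otimes$ on morphisms is once more a consequence of Frobenius reciprocity and Beck--Chevalley. I would obtain the associator, unitors and braiding of $\relcat$ as the images, under the graph embedding $\Gamma\colon\rcat\to\relcat$ sending $f$ to the relation $\setcom{(a,f(a))}{a}$, of the corresponding coherence isomorphisms of $(\rcat,\times,1)$; this $\Gamma$ is a strict monoidal functor, so every cell occurring in the pentagon, triangle and hexagon diagrams for $\relcat$ is $\Gamma$ applied to a coherence cell of $\rcat$, whence those diagrams commute because the corresponding ones for the Cartesian monoidal structure of $\rcat$ do. It is worth noting explicitly that $I=1$ is \emph{not} terminal in $\relcat$, so this really is the ``Cartesian-product-as-tensor'' monoidal structure rather than a Cartesian one.

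For closure I would show that $\relcat$ is compact closed with $A^{\ast}=A$: the unit $\eta_A\colon I\to A\otimes A$ and the counit $\varepsilon_A\colon A\otimes A\to I$ are both given by $\Delta_A$, regarded as a subobject of $1\times(A\times A)$ and of $(A\times A)\times 1$ respectively. Unfolding the relational composites, the two zigzag identities reduce to the equivalence, in the internal logic of $\rcat$, between $\exists a'.\,(a = a')\wedge(a' = a'')$ and $a = a''$, which again follows from $\exists_f\dashv f^{\ast}$ together with elementary properties of diagonals. Compact closure immediately gives that $\relcat$ is symmetric monoidal closed, with internal hom necessarily $A\multimap B\cong A^{\ast}\otimes B = A\otimes B = A\times B$ as stated; concretely the currying bijection is the chain $\relcat(C\otimes A,B) = \mathrm{Sub}\bigl((C\times A)\times B\bigr)\cong\mathrm{Sub}\bigl(C\times(A\times B)\bigr) = \relcat(C, A\multimap B)$ induced by reassociating the Cartesian product in $\rcat$, and its naturality in $C$ and $B$ is automatic from the compact closed structure (unpacked, it is one more Beck--Chevalley/Frobenius bookkeeping).

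The main obstacle, such as it is, is entirely organisational: setting up relational composition with enough care to verify associativity, the unit laws, and the zigzag identities, all of which collapse to Beck--Chevalley and Frobenius reciprocity for existential quantification in the topos $\rcat$. Because these properties hold in every topos, nothing about $\rcat$ beyond its being a presheaf category is used --- which is precisely why, as remarked just before the theorem, the familiar construction of a symmetric monoidal closed category of internal relations over $\Set$ transfers verbatim to $\relcat$.
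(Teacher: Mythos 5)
Your proposal is correct, and it is a fully worked-out version of what the paper only sketches: the paper's entire proof consists of asserting that the properties ``follow easily'' and then writing down the evaluation map $(A \multimap B) \otimes A \to_{\relcat} B$ explicitly as the relation holding of $(((a,b),a'),b')$ exactly when $a = a'$ and $b = b'$. The substantive organizational difference is how closure is obtained. The paper exhibits evaluation (and, implicitly, currying via the reassociation $\mathrm{Sub}((C\times A)\times B)\cong \mathrm{Sub}(C\times(A\times B))$ that you also write down) directly; you instead establish compact closure with every object self-dual, taking $\eta_A$ and $\varepsilon_A$ to be the diagonal and checking the zigzag identities, and then deduce monoidal closure with $A \multimap B = A^{*}\otimes B = A\times B$ as a formal consequence. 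Your route buys more: it isolates exactly which properties of $\rcat$ are used (Beck--Chevalley and Frobenius for $\exists_f \dashv f^{*}$ in a topos, hence nothing specific to $\Set^{\nats+1}$), it makes the coherence diagrams for the tensor free via the graph embedding $\Gamma\colon\rcat\to\relcat$, and it records the stronger fact that $\relcat$ is compact closed, of which the theorem's SMCC claim is a corollary. The paper's version buys concreteness --- the evaluation relation is the thing one actually needs when interpreting $\multimap$-elimination --- at the cost of leaving associativity of relational composition, functoriality of $\otimes$, and naturality of currying entirely to the reader. Your observation that $I = 1$ is not terminal in $\relcat$, so the structure is genuinely monoidal rather than Cartesian, is a worthwhile point the paper does not make.
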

\begin{proof}
  All of the properties of the monoidal product and exponential
  follows easily. Consider the evaluation map
  $(A\multimap B) \otimes A \to_{\relcat} B$. By definition this is a relation
  $(A \times B) \times A \sim_\rcat B$, which is a map
  $((A \times B) \times A) \times B \to_\rcat \Omega$. We define this map to
  be ``true'' for tuples $(((a,b),a'),b')$ with $a =_\rcat a' \land b =_\rcat b'$.
\end{proof}

\begin{theorem}
  There is an adjunction in $\relcat$:
  \begin{equation*}
    \Prev \vdash \Later
  \end{equation*}
  where $\Prev$ and $\Later$ are the lifting of the previous and later
  functors from $\rcat$ to $\relcat$.
\end{theorem}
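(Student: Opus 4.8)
The plan is to obtain this adjunction by transporting the $\rcat$-level adjunction $\Prev\vdash\Later$ through the relation construction. First I would pin down how $\Later$ and $\Prev$ act on $\relcat$: both leave objects unchanged, and a relation $R\hookrightarrow A\times B$ is sent to $FR\hookrightarrow F(A\times B)$, which we may regard as a subobject of $FA\times FB$ because both functors preserve finite products — this is immediate from Definition~\ref{def:rcat_later_prev}, since at each stage $F(A\times B)$ and $FA\times FB$ have the same value (the stage-$0$ value being $1$ in the $\Later$ case, and both functors fix the monoidal unit $1$). For this assignment to be a functor on $\relcat$ one needs $F$ to preserve monomorphisms, pullbacks, and the (regular epi, mono) factorisation: monos and image factorisation are computed levelwise in the presheaf topos $\rcat$, and $\Later,\Prev$ act levelwise — $\Prev$ as a shift, $\Later$ as a shift with the value $1$ inserted at stage $0$ — so they keep surjections surjective and preserve the factorisation on the nose; pullbacks are preserved by the same levelwise inspection (equivalently, $\Later$ is a right adjoint by the preceding theorem). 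Together these give the lifted endofunctors $\Later,\Prev\colon\relcat\to\relcat$, which are moreover strong symmetric monoidal for $\otimes=\times$.

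Next I would lift the unit $\eta\colon\mathrm{Id}\Rightarrow\Later\Prev$ and counit $\varepsilon\colon\Prev\Later\Rightarrow\mathrm{Id}$ of the $\rcat$-adjunction. The graph embedding $\Gamma\colon\rcat\to\relcat$ (identity on objects, $f\mapsto\{(x,fx)\}$) preserves composition and is compatible with the liftings, $\Gamma(Ff)=F(\Gamma f)$ for $F\in\{\Later,\Prev\}$, since $F$ preserves products and hence sends the graph of $f$, i.e.\ the image of $\langle\mathrm{id},f\rangle$, to the graph of $Ff$. Setting $\hat\eta:=\Gamma\eta$ and $\hat\varepsilon:=\Gamma\varepsilon$ — in fact $\Prev\Later=\mathrm{Id}$ already holds in $\rcat$, so $\hat\varepsilon$ is the identity relation — the two triangle identities for $\hat\eta,\hat\varepsilon$ in $\relcat$ are then obtained by applying the functor $\Gamma$ to the triangle identities of the $\rcat$-adjunction, using $\Gamma(g\circ f)=\Gamma g\circ\Gamma f$ together with the whiskering identity $\Gamma(Ff)=F(\Gamma f)$.

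The step I expect to be the main obstacle is the naturality of $\hat\eta$ (and $\hat\varepsilon$) as transformations between endofunctors of $\relcat$: naturality must hold against every morphism of $\relcat$, i.e.\ against an arbitrary relation $R\colon A\to B$, not merely against graphs of $\rcat$-maps, where it is inherited from $\rcat$. This is exactly the requirement that the $\rcat$-naturality squares of $\eta$ and $\varepsilon$ be pullbacks; $\varepsilon$ causes no trouble since it is an isomorphism, but the square for $\eta$ at stage $0$ collapses $A_0$ to $1$ and is not cartesian, so the two relational composites need not be literally equal. The clean statement is therefore an adjunction in $\relcat$ regarded as a locally posetal 2-category — which is also precisely the structure that the interpretation of the $\at$ connective below relies on — so the real work is to verify the two naturality inclusions, one direction coming from $\Later\Prev$ preserving the relevant pullback and the other from $\eta$ being a levelwise epimorphism away from stage $0$, after which the triangle laws hold up to the canonical inclusions of relations. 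I would carry this out by the same kind of stagewise diagram chase already used for the $\rcat$-level adjunction.
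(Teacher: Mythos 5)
The paper does not actually prove this theorem: it is stated bare, with even the underlying $\rcat$-level adjunction dismissed as ``an examination of the appropriate diagrams'' and the rest deferred to an appendix, so there is nothing to compare your argument against step by step. Your route --- lift $\Later$ and $\Prev$ to $\relcat$ through their action on subobjects of products (legitimate because both functors act stagewise and preserve finite limits, monos and covers), transport $\eta$ and $\varepsilon$ along the graph embedding, and then worry about naturality against arbitrary relations --- is the standard one, and the obstruction you isolate at stage $0$ is genuine, not a defect of your own argument. Indeed the theorem as literally stated (a $1$-categorical adjunction of endofunctors of $\relcat$) fails for the evident liftings, by a cardinality count: take $A$ with $A_0 = \{1,2\}$ and all other components empty, and $B$ terminal. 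Then $\Prev A \times B$ is empty at every stage, so $\relcat(\Prev A, B)$ is a singleton, whereas $A \times \Later B$ is $\{1,2\}\times 1$ at stage $0$ and empty elsewhere, so $\relcat(A,\Later B)$ has four elements; no bijection is possible. What survives is exactly your weakened statement: the transposes $R \mapsto \Later R \circ \Gamma\eta_A$ and $S \mapsto \Gamma\varepsilon_B \circ \Prev S$ form a Galois connection on each hom-poset (one composite is the identity, the other only an inclusion, since the round trip replaces the stage-$0$ part of $S$ by all of $A_0$), i.e.\ an adjunction in $\relcat$ viewed as a locally posetal $2$-category. One point of hygiene: the inclusion of relations that always holds in the naturality square for $\Gamma\eta$ comes from applying ordinary naturality of $\eta$ to the two legs of the span presenting $R$, while the reverse inclusion needs the square to be a quasi-pullback (comparison map to the pullback a cover), which at stage $0$ reduces to $f_0$ being epi --- your account of which direction needs which hypothesis is slightly scrambled, but the substance is right. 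You should state and prove the order-enriched version explicitly; as written, the paper's claim is stronger than what is true.
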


\begin{definition}
  We define the \emph{iterated later modality} or the ``at'' connective
  as a successive application of the later modality.
  \begin{align*}
    \Later^0 A &= A \\
    \Later^{(k+1)} A &= \Later(\Later^k A)
  \end{align*}
  and we will alternatively write $A \at k$ to mean $\Later^k A$.
\end{definition}

\begin{definition}
  We define the \emph{event} functor on $\relcat$ as an application of
  the iterated later modality.
  \begin{align*}
    \Diamond A &: \relcat \to \relcat \\
    (\Diamond A)_\infty &= A_\infty \\
    (\Diamond A)_n &= \Sigma(k:\nats).(\Later^k\;A)_n
  \end{align*}
\end{definition}
The event functor additionally carries a monadic structure (see~\cite{szamozvancev2018} and the
technical appendix).

\begin{theorem}
  We have the following isomorphism for any $A$
  \begin{align*}
    \Diamond A \cong \Sigma(n:\nats).A \at n
  \end{align*}
\end{theorem}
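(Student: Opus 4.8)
The plan is to produce the two witnessing relations explicitly and to check that they are mutually inverse in $\relcat$. First I would unfold both presheaves stage by stage. For a finite stage $m$ the definition of $\Diamond$ gives $(\Diamond A)_m = \Sigma(k:\nats).(\Later^k A)_m$, which is exactly the stage-$m$ component of $\Sigma(n:\nats).A \at n$; and for the point at infinity, \autoref{def:rcat_later_prev} yields $(\Later^n A)_\infty = A_\infty$ for every $n$, so the dependent sum at $\infty$ is identified with $A_\infty = (\Diamond A)_\infty$. Hence the two sides carry the same underlying set at every stage, and what remains is to see that this matching respects the presheaf (restriction-map) structure and assembles into an isomorphism of internal relations.

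Next I would define the comparison relation $r \colon \Diamond A \to \Sigma(n:\nats).A \at n$ in $\relcat$ to be, at each finite stage, the diagonal relating $(k,a)$ to $(k,a)$, and at $\infty$ the diagonal on $A_\infty$ under the identification above. Two things then need checking: (i) that this stage-indexed family is a genuine internal relation, i.e. a subobject of $\Diamond A \times (\Sigma(n:\nats).A\at n)$ in $\rcat$ — equivalently, that it is compatible with the restriction maps $\pi_m$ of both presheaves, which is where the restriction maps of $\Diamond A$, inherited from those of $\Later^k$, must be matched against those of the dependent sum; and (ii) that $r$ and its converse $r^{\mathrm{op}}$ compose to the identity relations on either side, which at finite stages is immediate once $r$ is the diagonal. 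Concretely $r$ and $r^{\mathrm{op}}$ are the denotations of the $\sym{out}$ and $\sym{into}$ constants of the widget API, and the two composite equations are precisely their round-trip laws.

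The step I expect to be the main obstacle is the behaviour at the point at infinity. The global component of $\Diamond A$ is $A_\infty$, not the $\nats$-fold copy that a naively computed dependent sum would produce, so the crux is to confirm that the intended reading of $\Sigma(n:\nats).A \at n$ at $\infty$ is this collapsed $A_\infty$ and that the collapse is coherent with the finite-stage components along the restriction maps — i.e. that the index $k$ recorded at a finite stage and the delay-erased global element still assemble into a single element of the product presheaf. Once that coherence is pinned down the remaining verifications are mechanical, and the monadic structure of $\Diamond$ recorded above is not needed here beyond the functoriality of $\Later^n$ that makes the dependent sum well defined.
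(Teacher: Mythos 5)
Your proposal is correct and takes the same route as the paper, whose entire proof is that the isomorphism ``follows immediately by the two preceding definitions'' --- i.e., by unfolding $(\Diamond A)_n = \Sigma(k:\nats).(\Later^k A)_n$ and $A \at k = \Later^k A$ stage by stage, exactly as you do. Your worry about the point at infinity, where $(\Diamond A)_\infty$ is defined to be $A_\infty$ rather than an $\nats$-indexed sum of copies of $A_\infty$, is a genuine subtlety that the paper's one-line proof silently elides, so your more careful treatment of that stage is a strict improvement on the published argument.
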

\begin{proof}
  This follows immediately by the two preceding definitions.
\end{proof}

\begin{theorem}
  We have the following adjunctions between $\Set$, $\rcat$ and
  $\relcat$:
  \begin{center}
    \begin{tikzcd}
      \Set \arrow[rr, "\Delta", bend left] & \perp &
      \rcat  \arrow[rr, "I", bend left] \arrow[ll, "\sym{lim}", bend left] & \perp &
      \relcat \arrow[ll, "P", bend left]
    \end{tikzcd}
  \end{center}
  where $\Delta$ is the constant functor, $\sym{lim}$ is the limit
  functor, $I$ is the inclusion functor and $P$ is the image functor.
\end{theorem}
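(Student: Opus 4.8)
The statement packages two adjunctions, and the plan handles them separately. The left-hand one, $\Delta \dashv \sym{lim}$, is an instance of the standard fact that in any complete category the limit functor is right adjoint to the constant-diagram functor; here the complete category is $\Set$, and the only thing worth making explicit is that the index category underlying $\rcat = \Set^{\nats+1}$ has the ``point at infinity'' as an initial object -- it is the object from which the restriction maps $A_\infty \to A_n$ in the picture of Definition~\ref{def:rcat_later_prev} emanate -- so that $\sym{lim}$ is concretely evaluation at $\infty$, i.e.\ $\sym{lim}\,A = A_\infty$. The right-hand adjunction $I \dashv P$ is the one that uses genuine structure of $\rcat$, namely that a presheaf category is a topos.

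For $\Delta \dashv \sym{lim}$ I would argue directly. A morphism $\Delta X \to A$ in $\rcat$ is a natural transformation out of the constant presheaf, i.e.\ a cone over $A$ with apex $X$; by naturality each component $X \to A_n$ is the composite of a single map $X \to A_\infty$ with the canonical restriction $A_\infty \to A_n$, and conversely any map $X \to A_\infty$ extends uniquely to such a cone. Hence $\rcat(\Delta X, A) \cong \Set(X, A_\infty) = \Set(X, \sym{lim}\,A)$, visibly natural in $X$ and in $A$; the counit $\Delta(\sym{lim}\,A) \to A$ is the limiting cone of restriction maps, and this is the claimed adjunction.

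For $I \dashv P$ I would first record that $\rcat$, being a presheaf category, is a topos: it is cartesian closed and carries the subobject classifier $\Omega$ of Definition~\ref{def:rcat_subobject}. By definition of $\relcat$ a morphism $A \to B$ there is a map $A \times B \to \Omega$ in $\rcat$, so currying along the cartesian-closed structure of $\rcat$ gives a bijection $\relcat(A,B) = \rcat(A \times B, \Omega) \cong \rcat(A, \Omega^{B})$ that is natural in $A$. I would then take $P\,B := \Omega^{B}$ on objects and let $P$ send a relation $R : B \to B'$ to its direct-image map $\exists_R : \Omega^{B} \to \Omega^{B'}$ (the ``image functor'' of the statement); functoriality of $P$ amounts to the facts that the diagonal relation on $B$ is sent to $\mathrm{id}_{\Omega^B}$ and that $\exists_{R' \circ R} = \exists_{R'} \circ \exists_{R}$ (direct images compose), which is exactly what makes the bijection above natural in $B$ with respect to $\relcat$-morphisms as well. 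Thus $I \dashv P$, with unit the singleton map $A \to \Omega^{A}$ and counit the membership relation $\Omega^{B} \to B$ -- equivalently, this is the free--forgetful adjunction exhibiting $\relcat$ as the Kleisli category of the power-object monad $\Omega^{(-)}$ on $\rcat$. Composing the two adjunctions then yields the composite $I\Delta \dashv \sym{lim}\,P$ between $\Set$ and $\relcat$, should that be wanted.

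The only genuinely computational step, and the one place where something could go wrong, is checking that $P$ is well defined and functorial on morphisms of $\relcat$, i.e.\ that relational composition in $\relcat$ transports under currying to composition of the direct-image maps $\exists_{(-)}$; naturality of the hom-bijection in the $\relcat$-variable is a corollary of the same computation. Everything else -- the first adjunction, the object-level bijection for the second, and both triangle identities -- is routine bookkeeping once the index category of $\rcat$ and the classifier $\Omega$ are in hand. A small point to get right is the variance and ordering of $\nats+1$: the diagram of $A$ in the excerpt pins down $\infty$ as the source of every restriction map, so $\sym{lim}$ is evaluation at $\infty$ rather than an inverse limit along $\nats$, and $\Delta X$ is the genuinely constant presheaf.
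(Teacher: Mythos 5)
Your proof is correct and is essentially the canonical argument this theorem rests on; the paper itself states the result without an in-line proof (deferring the details to its technical appendix), and your two ingredients --- $\Delta \dashv \sym{lim}$ with $\sym{lim}$ computed as evaluation at the index $\infty$, which is initial in $\nats+1$ so that the limit collapses to $A_\infty$, and $I \dashv P$ as the Kleisli adjunction for the power-object monad $\Omega^{(-)}$ on the topos $\rcat$ --- are exactly what the stated functors (``constant'', ``limit'', ``inclusion'', ``image'') are meant to denote. I see no gap; you correctly isolate the one nontrivial check, namely that direct images compose so that $P$ is functorial on relations.
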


\begin{corollary}
  \label{cor:set_relcat_adjunction}
  The above adjunction induces an adjunction between $\Set$ and $\relcat$.
\end{corollary}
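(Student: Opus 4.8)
The plan is to invoke the standard fact that adjoint pairs compose.
The preceding theorem records two adjunctions sitting in a chain:
$\Delta \dashv \sym{lim}$ between $\Set$ and $\rcat$, with left adjoint
$\Delta : \Set \to \rcat$ and right adjoint $\sym{lim} : \rcat \to
\Set$; and $I \dashv P$ between $\rcat$ and $\relcat$, with left
adjoint $I : \rcat \to \relcat$ and right adjoint $P : \relcat \to
\rcat$. I claim these compose to the asserted adjunction, with left
adjoint $I \circ \Delta : \Set \to \relcat$ and right adjoint
$\sym{lim} \circ P : \relcat \to \Set$.

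Concretely, I would start from the two families of hom-set
isomorphisms supplied by the theorem, natural in every argument: for
$S \in \Set$, $X \in \rcat$ and $R \in \relcat$,
\[
  \relcat(I X, R) \;\cong\; \rcat(X, P R)
  \qquad\text{and}\qquad
  \rcat(\Delta S, X) \;\cong\; \Set(S, \sym{lim}\, X).
\]
Instantiating the first isomorphism at $X = \Delta S$ and then
applying the second yields
\[
  \relcat(I(\Delta S), R)
  \;\cong\; \rcat(\Delta S, P R)
  \;\cong\; \Set(S, \sym{lim}(P R)),
\]
which is exactly the hom-set isomorphism witnessing $I \circ \Delta
\dashv \sym{lim} \circ P$. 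Naturality in $S$ and in $R$ is inherited
from the naturality of the two component isomorphisms, since composites
of natural transformations are again natural; equivalently, one may
take as unit and counit the evident pasted composites of the units and
counits of the two given adjunctions and check the triangle
identities, which reduce to those of the two adjunctions together with
the functoriality of $I$ and $\sym{lim}$.

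There is essentially no obstacle here -- this is the standard
composition-of-adjunctions lemma, and all the category-theoretic data
($\rcat$ a presheaf category, the various functors) is supplied by the
earlier results. The only point demanding a little care is the
orientation bookkeeping: since $\Delta$ and $I$ are both the left
adjoints, the composite $I \circ \Delta$ is the left adjoint
$\Set \to \relcat$ and $\sym{lim} \circ P$ the right adjoint
$\relcat \to \Set$, and one must paste the coherence data in the
matching order.
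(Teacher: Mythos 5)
Your proof is correct and matches the paper's (implicit) argument: the corollary is exactly the standard composition of the two adjunctions $\Delta \dashv \sym{lim}$ and $I \dashv P$ recorded in the preceding theorem, yielding $I \circ \Delta \dashv \sym{lim} \circ P$ between $\Set$ and $\relcat$. The paper offers no further detail, so your explicit hom-set chain is if anything more thorough than what the authors wrote.
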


\subsection{The Widget Object}
One of the most important objects in $\relcat$ is the \emph{widget}
object. This object will be used to interpret widgets and
prefixes. The widget object will be defined with respect to an ambient
notion of identifiers, which we will denote $\Id$. These will be part
of the hyperdoctrine structure define below, and for now, we will just
assume such an object to exists. We will also use a notion of
timesteps internal to the widget object. Note that this timestep is
different from the abstract timestep used for defining $\relcat$, but
are related as defined below. We denote the abstract timesteps with
$\Time$.

Before we can define the widget object itself, we need to define an
appropriate object of commands. In our minimal Widget API, the only
\emph{semantic} commands will be $\sym{setColor}, \sym{onClick}$ and
$\sym{onKeypress}$. The rest of the API will be defined as morphisms
on the widget object itself. To work with the semantics commands, we
additionally need a \emph{compatibility} relation. This relation
describes what commands can be applied at the same time. In our
setting this relation is minimal, but can in principle be used to
encode whatever restrictions is needed for a given API.

\begin{definition}
  We define the command object as
  \begin{align*}
    \Cmd = \set{(\sym{setColor},color), \sym{onClick}, \sym{onKeypress}}
  \end{align*}
  where $color$ is some element of an ``color'' object.  We define the
  compatibility relations as
  \begin{align*}
    cmd \comp cmd' \text{ iff } cmd = (\sym{setColor},c) \Rightarrow cmd' \neq (\sym{setColor},c')
  \end{align*}
\end{definition}
The only non-compatible combination of commands is two application of
the $\sym{setColor}$ command, the idea being that you can not set the
color twice in the same timestep.

We can now define the widget and prefix objects
\begin{definition}
  The widget object, denoted $\Widget$, is indexed by $i \in \Id$ and
  is defined as
  \begin{align*}
    \Widget_\infty \;i &= \setcom{(w,i)}{ w \in \mathcal{P}(\Time \times \Cmd), (t,c) \in w \land (t,c') \in w \to c \comp c'}\\
    \Widget_n \;i  &=  \setcom{(w,i) \subset \Widget_\infty\; i}{\forall(t,c) \in w, t \leq n}
  \end{align*}
  The prefix object, denoted $\sym{Prefix}$, is indexed by $i \in \Id$ and
  $t \in \Time$ and is defined as
  \begin{align*}
    \sym{Prefix}_\infty \;i \;t &= \setcom{(P,i) \subset \Widget_\infty \;i}{\forall (t',c) \in P, t' \leq t} \\
    \sym{Prefix}_n \;i \;t &=
                             \begin{cases}
                               \setcom{(P,i) \subset \sym{Prefix}_\infty \;i \;t}{\forall (t',c) \in P, t' \leq n} & n < t \\
                               \LUnit & \text{otherwise}
                             \end{cases}
  \end{align*}
\end{definition}
The widget object is basically a collection of times and commands
keeping track of what has happened to the widget at various times. One
can think of an \emph{logbook} with entries for each time step. At the
point at infinity, the ``global'' behavior of the widget is defined,
i.e., the full logbook of the widget. For each $n$, $\Widget_n$ is
simply what has happened to the widget so far, i.e., a truncated
logbook. The prefix object is a widget object that is only defined up
to some timestep, and is the unit after that.

Observe there is a semantic difference between the widget where the
color is set only once, and the widget where the color is set at every
timestep, and this reflects a real difference in actual widget
behavior. The difference between $\mathit{turnRedOnClick}\;w$ and
$\mathit{keepTurningRed}\;w$ is that if the former is later set to be
blue, it will remain blue, whereas the latter will turn back to being
red.

To work with widgets we define two ``restriction'' maps, which are
used later for the interpretations.
\begin{definition}
  We define the following on widgets and prefixes
  \begin{align*}
    &\sym{shift} \; t : \sym{Widget}\;i \to_{\relcat} \sym{Widget}\;i
    && \sym{prefix} \; t \; i : \sym{Widget}\;i \to_{\relcat} \sym{Prefix}\;i\;t \\
    &(\sym{shift} \;t \; W)_n = \setcom{(t' - t,c)}{(t',c) \in W \land t \leq t'}
    &&
    (\sym{prefix} \; t \; i \; W)_n =
    \begin{cases}
      \setcom{(t',c) \in W}{t' < t} & n < t \\
      \LUnit & n \geq t
    \end{cases}
  \end{align*}
\end{definition}
The intuition behind these is that $\sym{prefix} \; t \; i$ ``cuts off'' the
widget after $t$, giving a prefix, whereas $\sym{shift} \; t$ shifts
forward all entries in the widget by $t$.

Using the above, we can now define the $\sym{split}$ and $\sym{join}$
morphisms. These are again given w.r.t ambient $\Id$ and $\Time$
objects, which will be part of the full hyperdoctrine structure:
\begin{definition}
  We define the following morphisms on the widget object
  \begin{align*}
    &\sym{split}\;i\;t : \Widget\;i \to_{\relcat} \sym{Prefix}\;i\;t \otimes \Widget\;i \at t
    && \sym{join}\;i\;t : \sym{Prefix}\;i\;t \otimes \Widget\;i \at t \to_{\relcat} \Widget\;i \\
    &(\sym{split}\;i\;t\;w)_n = (\sym{prefix}\;t\;i\;w,\sym{shift}\;t\;w)_n
    &&
    (\sym{join}\;i\;t\;(p,w))_n =
                              \begin{cases}
                                p_n & n < t \\
                                w_{n-t} & n \geq t
                              \end{cases}
  \end{align*}
\end{definition}

\subsection{Linear-non-linear Hyperdoctrine}
So far we have not explained in details how to model the quantifiers
in our system. To do this, we use the notion of a
\emph{hyperdoctrine}~\cite{lawvere1969adj}. For ordinary first-order
logic, this is a functor from a category of contexts and substitutions
to the category of Cartesian closed categories, with the idea being that we have
one CCC for each valuation of the free first-order variables.

As our category of contexts, we use a Cartesian category that can
interpret our index objects, namely $\Time$ and $\Id$, where the
former is interpreted as $\nats + 1$ and the latter as $\nats$. In our
case, both $\Set$ and $\relcat$ are themselves hyperdoctrines w.r.t to
this category of contexts, the former a first-order hyperdoctrine and
the latter a multiplicative intuitionistic linear logic (MILL)
hyperdoctrine. Together these form a linear-non-linear hyperdoctrine
through the adjunction given in
\autoref{cor:set_relcat_adjunction}. Formally, we have

\begin{definition}
  A linear-non-linear hyperdoctrine is a MILL hyperdoctrine $L$
  together with a first-order hyperdoctrine $C$ and a fiber-wise
  monoidal adjunction $F : L \leftrightarrows C : G$.
\end{definition}

\begin{theorem}
  The categories $\Set$ and $\relcat$ form a linear-non-linear
  hyperdoctrine w.r.t the interpretation of the indices objects, with
  the adjunction given as in \autoref{cor:set_relcat_adjunction}.
\end{theorem}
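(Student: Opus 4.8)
The plan is to verify the three clauses in the definition of a linear--non-linear hyperdoctrine one at a time: that $\Set$ is a first-order hyperdoctrine over the category $\mathbb{C}$ of index contexts and substitutions, that $\relcat$ is a MILL hyperdoctrine over the same $\mathbb{C}$, and that the adjunction of \autoref{cor:set_relcat_adjunction} lifts to a fibre-wise monoidal adjunction between them. Here $\mathbb{C}$ has index contexts $\Theta$ as objects and tuples of well-sorted index terms as morphisms; it is Cartesian, with concatenation of contexts as product and the empty context as terminal object, and each $\Theta$ is sent to the set $\llbracket\Theta\rrbracket$ built as a finite product of copies of $\nats$ (for $\Id$) and $\nats+1$ (for $\Time$), so that context projections correspond to product projections of sets.

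For the Cartesian side I would take the fibre over $\Theta$ to be $\Set^{\llbracket\Theta\rrbracket}$, the category of $\llbracket\Theta\rrbracket$-indexed families of sets, with reindexing along a substitution given by precomposition with the induced function of sets. Each fibre is a presheaf topos, hence in particular Cartesian closed and finitely cocomplete, and precomposition preserves this structure strictly. Along a context projection $\pi : \llbracket\Theta,i:\sigma\rrbracket \to \llbracket\Theta\rrbracket$ the functor $\pi^{*}$ has a left adjoint $\exists_\pi$ and a right adjoint $\forall_\pi$, computed fibre-wise as the $\llbracket\sigma\rrbracket$-indexed coproduct and product in $\Set$ (together with the equality left adjoint along diagonals, if required, likewise present in $\Set$); the Beck--Chevalley condition along pullbacks of projections is the familiar fact that set-indexed coproducts and products commute with reindexing. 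This gives the first-order hyperdoctrine $C = \Set$.

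For the linear side I would take the fibre over $\Theta$ to be $\relcat^{\llbracket\Theta\rrbracket}$, with pointwise morphisms and, again, reindexing by precomposition. Each fibre is symmetric monoidal closed by pointwise transport of the SMCC structure on $\relcat$ established earlier, and finitely cocomplete (the pointwise disjoint unions provide the additive/sum types); precomposition preserves all of this strictly. The quantifiers along a projection $\pi$ are once more computed componentwise, as $\llbracket\sigma\rrbracket$-indexed coproducts and products in $\relcat$: since $\rcat = \Set^{\nats+1}$ is a cocomplete presheaf topos its coproducts are disjoint and universal, and $\relcat \cong \relcat^{\mathrm{op}}$ is self-dual, so small coproducts and products in $\relcat$ both exist and coincide, furnishing $\exists_\pi$ and $\forall_\pi$ simultaneously; Beck--Chevalley is immediate since everything is computed pointwise in the family index. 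The reactive structure $\Later$, $\Prev$, $\Event$, ${-}\at{-}$ and the widget and prefix objects are transported into each fibre but are irrelevant to the hyperdoctrine axioms. This gives the MILL hyperdoctrine $L = \relcat$.

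It remains to lift the adjunction and to identify the main obstacle. Write $F = I\circ\Delta : \Set \to \relcat$ and $G = \sym{lim}\circ P : \relcat \to \Set$ for the two composites underlying \autoref{cor:set_relcat_adjunction}, so that $F \dashv G$. Postcomposition yields $F_\Theta : \Set^{\llbracket\Theta\rrbracket} \to \relcat^{\llbracket\Theta\rrbracket}$ and $G_\Theta$ in the other direction with $F_\Theta \dashv G_\Theta$ pointwise, and because both the adjunction and the reindexings act pointwise these commute with substitution, i.e.\ constitute a map of (hyper)doctrines. Moreover $F$ is strict symmetric monoidal from $(\Set,\times)$ to $(\relcat,\otimes)$: the constant functor $\Delta$ sends finite products of sets to pointwise products in $\rcat$, and the monoidal product on $\relcat$ is on objects exactly the $\rcat$-product (with the $\rcat$-terminal object as unit), so $I$ carries this structure across strictly; hence, by the theory of doctrinal adjunctions, $F \dashv G$ is a monoidal adjunction, and being defined pointwise it is fibre-wise monoidal, which assembles the three parts into the claimed linear--non-linear hyperdoctrine. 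I expect the genuine work to sit on the linear side: confirming that $\relcat$ really has the $\nats$- and $(\nats+1)$-indexed products and coproducts needed to interpret $\forall$ and $\exists$ over $\Time$ and $\Id$, that reindexing preserves the full symmetric monoidal closed structure, and that Beck--Chevalley holds for the linear quantifiers; once that is settled, the monoidality and fibre-wise character of the $F \dashv G$ adjunction are essentially bookkeeping via the doctrinal-adjunction argument.
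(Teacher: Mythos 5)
Your decomposition --- verifying that $\Set$ and $\relcat$ are, respectively, a first-order and a MILL hyperdoctrine over the Cartesian category of index contexts, and then lifting the composite adjunction $I\circ\Delta \dashv \sym{lim}\circ P$ fibre-wise to a monoidal adjunction --- is exactly the strategy the paper sets up via its definition of LNL hyperdoctrine and \autoref{cor:set_relcat_adjunction}, with the details likewise deferred to pointwise/componentwise verification in the technical appendix. The proposal is correct and takes essentially the same approach, and you rightly flag the existence of indexed products and coproducts in $\relcat$ and Beck--Chevalley for the linear quantifiers as the only substantive checks.
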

We refer the reader to the accompanying technical appendix for the
full details.

\subsection{Denotational Semantics}
We the above, we have enough structure to give an interpretation of
$\langname$. Again, most of this interpretation is standard in the use
of the hyperdoctrine structure, and we interpret $\Diamond$ in the
obvious way using the linear hyperdoctrine structure on $\relcat$. As
an example, we sketch the interpretation of the widget object and the
$\sym{setColor}$ command below.

\begin{definition}
  We interpret the $\sym{Widget}\;i$ and $\sym{Prefix}\;i$ types using
  the widget and prefix objects:
  \begin{align*}
    \llbracket \Theta \vdash \sym{Widget}\;i \rrbracket
    &= \sym{Widget} \; \llbracket \Theta \vdash_s i : \sym{Id} \rrbracket \\
    \llbracket \Theta \vdash \sym{Prefix}\;i\;t \rrbracket
    &= \sym{Prefix} \; \llbracket \Theta \vdash_s i :\sym{Id} \rrbracket \; \llbracket \Theta \vdash_s t : \sym{Time} \rrbracket
  \end{align*}
  and we interpret the $\sym{setColor}$ commands as:
  \begin{align*}
    \llbracket \sym{setColor} : \forall (i:\sym{Id}), \sym{Widget}\;i \otimes \sym{F}\;\sym{Color} \multimap \sym{Widget}\;i \rrbracket = \\
    \setcom{w \cup_W \set{(0,(\sym{setColor},col))}}{w \in \llbracket \sym{Widget}\;i \rrbracket, col \in \llbracket \sym{Color} \rrbracket}
  \end{align*}
  where $\cup_W$ is a ``widget union'', which is a union of sets such
  that identifiers indices and compatibility of commands are respected
\end{definition}
This interpretation shows that a widget is indeed a logbook of
events. Using the \sym{setColor} command simply adds an entry to the
logbook of the widget. Note we only set the color in the current
timestep. To set the color in the future, we combine the above with
appropriate uses of splits and joins.
The interpretation of
$\sym{split}$ and $\sym{join}$ are done using their semantic
counterparts, and the interpretation of $\sym{onClick}$ and
$\sym{onKeypress}$ are done, using our non-deterministic semantics, by
associating a widget with \emph{all possible occurrences} of the
corresponds event.

\subsection{Soundness of Substitution}
Finally, we prove that semantic substitution is sound w.r.t syntactic
substitution. As with the proofs of type preservation for syntactic
substitution, there are several cases for the different kinds of
substitution, but the main results is again concerned with
substitution of indices:
\begin{theorem}
  Given $\zeta : \Theta' \to \Theta$, $\hastypec{\Theta}{\Gamma}{e}{X}$ and $\hastypel{\Theta}{\Gamma}{\Delta}{t}{A}$ then
  \begin{align*}
    \llbracket \zeta \rrbracket \; \llbracket \hastypec{\Theta}{\Gamma}{e}{X} \rrbracket
    &= \llbracket \hastypec{\Theta'}{\zeta(\Gamma)}{\zeta(e)}{\zeta(X)} \rrbracket
    \\
    \llbracket \zeta \rrbracket \; \llbracket \hastypel{\Theta}{\Gamma}{\Delta}{t}{A} \rrbracket
    &= \llbracket \hastypel{\Theta'}{\zeta(\Gamma)}{\zeta(\Delta)}{\zeta(t)}{\zeta(A)}\rrbracket
  \end{align*}
\end{theorem}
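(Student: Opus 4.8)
The plan is to prove the two displayed equalities by simultaneous induction on the typing derivations of $\hastypec{\Theta}{\Gamma}{e}{X}$ and $\hastypel{\Theta}{\Gamma}{\Delta}{t}{A}$; the two judgments must be handled together because the $(\sym{G}\text{-}\mathrm{I})$/$(\sym{G}\text{-}\mathrm{E})$ and $(\sym{F}\text{-}\mathrm{I})$/$(\sym{F}\text{-}\mathrm{E})$ rules pass between the Cartesian and linear fragments. Here $\llbracket \zeta \rrbracket$ denotes the morphism $\llbracket \Theta' \rrbracket \to \llbracket \Theta \rrbracket$ in the category of contexts determined by the index substitution, and the left-hand side ``$\llbracket \zeta \rrbracket \, \llbracket - \rrbracket$'' is to be read as the action of the reindexing functor $\llbracket \zeta \rrbracket^{*}$ of the (linear, resp.\ Cartesian) hyperdoctrine along $\llbracket \zeta \rrbracket$. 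Before the term-level induction I would first dispatch the type- and context-level statements: $\llbracket \zeta \rrbracket^{*} \llbracket \Theta \vdash X \rrbracket = \llbracket \Theta' \vdash \zeta(X) \rrbracket$ and $\llbracket \zeta \rrbracket^{*} \llbracket \Theta \vdash A \rrbracket = \llbracket \Theta' \vdash \zeta(A) \rrbracket$, and pointwise the analogues for $\Gamma$ and $\Delta$ so that source and target of the two sides of each equation genuinely coincide. These are a routine induction on the structure of types, the only interesting clauses being $\forall$ and $\exists$ — which use the Beck--Chevalley condition, i.e.\ that $\llbracket \zeta \rrbracket^{*}$ commutes up to canonical iso with the adjoints interpreting the quantifiers — and the time-indexed connectives $A \at \tau$ and $\Diamond A$, which reduce to the fact that $\llbracket \zeta \rrbracket^{*}$ commutes with the lifted $\Later$, hence with $\Later^{\tau}$ and, via $\Diamond A \cong \Sigma(n{:}\nats).A \at n$, with the $\Diamond$ monad.

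With those in hand, the term-level induction goes case by case. In every case the interpretation of the conclusion is built from the interpretations of the premises by post-composing with one fixed piece of categorical structure: an evaluation or currying map, the monoidal associator/unitor/symmetry, a unit or multiplication of $\Diamond$, a component of the $F \dashv G$ adjunction, a quantifier unit/counit, or one of the widget morphisms $\sym{split}$, $\sym{join}$, $\sym{setColor}$, $\sym{onClick}$, $\sym{onKeypress}$. The obligation then collapses to the naturality of that structure with respect to $\llbracket \zeta \rrbracket^{*}$. The variable rules use that the projections out of context-interpretations are natural; the index-constant rules (Time, Id) use that a global element is carried by reindexing to the ``same'' element; the structural linear/Cartesian rules ($\multimap$, $\otimes$, $\LUnit$, $\CUnit$, $\to$, $\oplus$) are the standard monoidal-closed and CCC naturality squares; and the quantifier introduction/elimination rules invoke Beck--Chevalley again, now on the term level. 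Each of these cases is a short diagram chase once the type-level lemma has been used to rewrite the types appearing in the derivation.

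The genuinely new content, and the step I expect to be the main obstacle, is the group of rules that manipulate time: $(@\text{-}\mathrm{I})$, $(@\text{-}\mathrm{E})$, $(\sym{delay})$, $(\LUnit_\tau\text{-}\mathrm{E})$, $(\otimes_\tau\text{-}\mathrm{E})$ and $(\sym{select})$, together with the widget and prefix objects. The subtlety is that $\zeta$ may act non-trivially on the time expression $\tau$ in the rule, so one must verify that index substitution commutes with the semantic context-shift $\Delta \mapsto \attime{\Delta}{\tau}$ and with the $\Later^{\tau}$-reindexing implicit in the $:_\tau$ judgment; concretely this amounts to $\llbracket \zeta \rrbracket^{*}\, \Later^{\llbracket \tau \rrbracket} = \Later^{\llbracket \zeta(\tau) \rrbracket}\, \llbracket \zeta \rrbracket^{*}$, which holds because $\llbracket \zeta(\tau) \rrbracket = \llbracket \tau \rrbracket \circ \llbracket \zeta \rrbracket$ as global elements of $\nats + 1$ and $\Later$ is defined by precomposition with successor. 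For the widget and prefix objects one checks that $\sym{Widget}$, $\sym{Prefix}$, $\sym{split}$, $\sym{join}$, $\sym{shift}$ and $\sym{prefix}$ are natural in their $\Id$/$\Time$ parameters — essentially because each is defined uniformly by a set-comprehension in the fibre over those parameters — so that reindexing along $\llbracket \zeta \rrbracket$ merely renames the parameter; this is where the hyperdoctrine presentation pays off. For $(\sym{select})$ one additionally uses naturality of the isomorphism $\Diamond A \cong \Sigma(n{:}\nats).A \at n$ and of the interpretation of the $(.3)$-style decomposition of $\Diamond(A \otimes B)$. Once all of these commutation facts are recorded as lemmas, the remaining inductive steps are mechanical, and the two displayed equations follow.
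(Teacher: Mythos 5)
Your proposal is correct and follows essentially the same route as the paper, which proves this (in its technical appendix) by the same ``lengthy but standard'' induction over the typing derivation it uses for the syntactic substitution lemmas, discharging each rule via naturality of the interpreting categorical structure and Beck--Chevalley for the quantifiers. Your identification of the time-indexed rules and the commutation of reindexing with $\Later^{\tau}$ as the only non-routine cases is accurate, and the preliminary type- and context-level substitution lemmas you set up are exactly the right scaffolding.
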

Proofs for all six substitutions lemmas can be found in the technical
appendix.


\section{Related and Future Work}
\label{sec:related_work}

Much work has sought to offer a logical perspective on FRP via the
Curry--Howard
correspondence~\cite{krishnaswami2011ultrametric,Jeltsch2012,jeffrey2012,jeltsch2013temporal,krishnaswami13frp,cave14fair,bahr2019simply}. As
mentioned earlier, most of this work has focused on calculi that have
a Nakano-style later modality~\cite{nakano2000}, but this has the
consequence that it makes it easy to write programs which wake up on
every clock tick.


In this paper, we remove the explicit next-step modality from the
calculus, which opens the door to a more efficient implementation
style based on the so-called ``push'' (or event/notification-based)
implementation style. Elliott~\cite{Elliott2009-push-pull-frp} also
looked at implementing a push-based model, but viewed it as an
optimization rather than a first-class feature in its own right. We
also hope to use an effect system to track when reflows and redraws
occur, which should make it easier to keep track of when potentially
expensive UI operations are taking place. Moreover, we can extend the
widget semantics and compatibility relation to track these events,
which should let us test if we can easily put the domain knowledge of
browser developers into our semantic model.

In future work, we plan on implementing a language based upon this
calculus, with the idea that we can compile to Javascript, and
represent widgets with DOM nodes, and represent the $\Diamond{A}$ and
$A \at n$ temporal connectives using doubly-negated callback types (in
Haskell notation, \texttt{Event A = (A -> IO ()) -> IO ()}).  This
should let us write GUI programs in a convenient functional style,
while generating code which attaches callbacks and does mutable
updates in the same style that a handwritten GUI program would use.


Our model, in terms of $\Set^{\nats+1}$, can be seen as a model of LTL
that has been enriched from being about time-indexed true-values to
time-indexed sets. The addition of the global view or point at
infinity has the benefit that it enables us to give a model that
distinguishes between least and greatest fixed
points~\cite{graulund2018} (i.e., inductive and coinductive types),
unlike in models of guarded recursion where guarded types are
bilimit-compact~\cite{Birkedal11firststeps}.  This will let us use the
idea that temporal liveness and safety properties can be encoded in
types using inductive and coinductive
types~\cite{cave14fair,bahr2020diamonds}.


One interesting recent development in natural deduction systems for
comonadic modalities is the introduction of the so-called
``Fitch-style'' calculi~\cite{BirkedalL:gdtt-conf,clouston2018fitch},
which offer an alternative to the Pfenning--Davies pattern-style
elimination~\cite{pfenning2001} for the box comonad.  These
calculi have seen successful application to reactive programming
languages~\cite{bahr2019simply}, and one interesting question is
whether they extend to Benton--Wadler adjoint calculi as well -- i.e.,
can the $\F(X)$ modality be equipped with a direct style eliminator?


%
%
%
\bibliographystyle{splncs04}
\bibliography{article-tex/bibliography}
\end{document}